\documentclass[11pt,a4paper]{article}
\usepackage{amssymb,amsthm,graphicx,color,enumerate,verbatim}
\usepackage{amsmath,framed,cite}
\usepackage{authblk}
\usepackage[hidelinks]{hyperref}
\usepackage{fullpage}
\usepackage{caption}
\usepackage{subcaption}
\usepackage[ruled,vlined,linesnumbered]{algorithm2e}
\usepackage{relsize}
\usepackage{etoolbox}
\providetoggle{long}
\usepackage[none]{hyphenat}
\settoggle{long}{true}

\makeatletter
\DeclareRobustCommand*\cal{\@fontswitch\relax\mathcal}
\makeatother

\usepackage{enumerate}

\DeclareMathOperator{\suf}{suf}

\usepackage[ruled]{algorithm2e} % For algorithms

\SetAlFnt{\small}
\SetAlCapFnt{\small}
\SetAlCapNameFnt{\small}
\SetAlCapHSkip{0pt}
\IncMargin{-\parindent}

\theoremstyle{plain}
   \newtheorem{lemma}{Lemma}[section]
   \newtheorem{theorem}[lemma]{Theorem}
   \newtheorem{proposition}[lemma]{Proposition}
\theoremstyle{definition}
   
   \newtheorem{example}[lemma]{Example}
\theoremstyle{remark}
   
\numberwithin{equation}{section}

\pagestyle{plain}

\begin{document}
\title{New Algorithms for Weighted $k$-Domination and\\ Total $k$-Domination Problems in Proper Interval Graphs}

\author[1,2]{Nina Chiarelli\thanks{nina.chiarelli@famnit.upr.si}}
\author[1,2]{Tatiana Romina Hartinger\thanks{tatiana.hartinger@iam.upr.si}}
\author[3,4]{Valeria Alejandra Leoni\thanks{valeoni@fceia.unr.edu.ar}}
\author[3,5]{\newline Maria In\'es Lopez Pujato\thanks{mineslpk@hotmail.com}}
\author[1,2]{Martin Milani\v c\thanks{martin.milanic@upr.si}}
\affil[1]{\normalsize University of Primorska, Faculty of Mathematics, Natural Sciences and Information Technologies, Glagolja\v ska 8, SI-6000 Koper, Slovenia}
\affil[2]{\normalsize University of Primorska, Andrej Maru\v si\v c Institute, Muzejski trg 2, SI-6000 Koper, Slovenia}
\affil[3]{\normalsize FCEIA, Universidad Nacional de Rosario, Rosario, Santa Fe, Argentina}
\affil[4]{\normalsize CONICET, Argentina}
\affil[5]{\normalsize ANPCyT, Argentina}

%%%%%%%%%%%%%%%%%%%%%%

\date{\today}

\maketitle

\begin{abstract}
Given a positive integer $k$, a $k$-dominating set in a graph $G$ is a set of vertices such that every vertex not in the set has at least $k$ neighbors in the set. A total $k$-dominating set, also known as a $k$-tuple total dominating set, is a set of vertices such that every vertex of the graph has at least $k$ neighbors in the set. The problems of finding the minimum size of a $k$-dominating, respectively total $k$-dominating set, in a given graph, are referred to as $k$-domination, respectively total $k$-domination.  These generalizations of the classical domination and total domination problems are known to be NP-hard in the class of chordal graphs, and, more specifically, even in the classes of split graphs (both problems) and undirected path graphs (in the case of total $k$-domination). On the other hand, it follows from recent work of Kang et al.~(2017) that these two families of problems are solvable in time $\mathcal{O}(|V(G)|^{6k+4})$ in the class of interval graphs. We develop faster algorithms for $k$-domination and total $k$-domination in the class of proper interval graphs, by means of reduction to a single shortest path computation in a derived directed acyclic graph with $\mathcal{O}(|V(G)|^{2k})$ nodes and
$\mathcal{O}(|V(G)|^{4k})$ arcs. We show that a suitable implementation, which avoids constructing all arcs of the digraph, leads to
%For each fixed $k\ge 1$, the
a running time of
$\mathcal{O}(|V(G)|^{3k})$. The algorithms are also applicable to the weighted case.
\end{abstract}

\noindent
{\bf Keywords:} $k$-domination, total $k$-domination, proper interval graph, polynomial-time algorithm

\section{Introduction}

Variants of domination in graphs form a rich area of graph theory, with many useful and interesting concepts, results, and challenging problems~\cite{MR1605684,MR1605685,MR3060714}. In this paper we consider a family of generalizations of classical domination and total domination known as $k$-domination and total $k$-domination. Given a graph $G$ and a positive integer $k$, a \emph{$k$-dominating set} in $G$ is a set $S\subseteq V(G)$ such that every vertex $v\in V(G)\setminus S$ has at least $k$ neighbors in $S$, and
a \emph{total $k$-dominating set} in $G$ is a set $S\subseteq V(G)$ such that every vertex $v\in V(G)$ has at least $k$ neighbors in $S$.
The $k$-domination and the total $k$-domination problems aim to find the minimum size of a $k$-dominating, resp.~total $k$-dominating set, in a given graph. The notion of $k$-domination was introduced by Fink and Jacobson in 1985~\cite{MR812671} and studied in a series of papers (e.g.,~\cite{MR3043099,MR3043104,MR2974046,MR2776174,MR2370182}) and in a survey Chellali et al.~\cite{MR2863534}. The notion of total $k$-domination was introduced by Kulli in 1991~\cite{KULLI1991+} and studied under the name of \emph{$k$-tuple total domination} by Henning and Kazemi in 2010~\cite{MR2607047} and also in a series of recent papers~\cite{MR2974027,MR2960327,MR3215459,ALT2016+}. The terminology ``$k$-tuple total domination'' was introduced in analogy with the notion of ``$k$-tuple domination'', introduced in 2000 by Harary and Haynes~\cite{MR1755232}.\footnote{A set $S$ of vertices is said to be a \emph{$k$-tuple dominating set} if every vertex of $G$ is adjacent or equal to at least $k$ vertices in $S$.} The redundancy involved in $k$-domination and total $k$-domination problems makes them useful in various applications, for example in forming sets of representatives or in resource allocation in distributed computing systems (see, e.g.,~\cite{MR1605684}). However, these problems are known to be NP-hard~\cite{MR995851,MR2960327} and also hard to approximate~\cite{MR3027964}.

The $k$-domination and total $k$-domination problems are NP-hard not only for general graphs but also in the class of chordal graphs. More specifically, the problems are NP-hard in the class of split graphs~\cite{MR3043104,MR2960327} and, in the case of total $k$-domination, also in the class of undirected path graphs~\cite{MR3215459}. We consider  $k$-domination and total $k$-domination in another subclass of chordal graphs, the class of proper interval graphs. A graph $G$ is an \emph{interval graph} if it has an intersection model consisting of closed intervals on a real line, that is, if there exist a family ${\cal I}$ of intervals on the real line and a one-to-one correspondence between the vertices of $G$ and the intervals of ${\cal I}$ such that two vertices are joined by an edge in $G$ if and only if the corresponding intervals intersect. A \emph{proper interval graph} is an interval graph that has a \emph{proper interval model}, that is, an intersection model in which no interval contains another one. Proper interval graphs were introduced by Roberts~\cite{MR0252267}, where it was shown that they coincide with the unit interval graphs, that is, interval graphs having an intersection model consisting of intervals of unit length. Various characterizations of proper interval graphs have been developed in the literature (see, e.g.,~\cite{MR2364171,MR1180196,MR1368745,MR1203643}) and several linear-time recognition algorithms are known, which in case of a yes instance also compute a proper interval model (see, e.g.,~\cite{MR2049655} and references cited therein).

The usual domination and total domination problems (that is, when $k = 1$) are known to be solvable in linear time in the class of interval graphs (see~\cite{MR1143909,MR1622646,MR1610005} and~\cite{MR845445,MR930198,MR942582,MR1622646,MR971614}, respectively).
Furthermore, for each fixed integer $k\ge 1$, the $k$-domination and total $k$-domination problems are solvable in time $\mathcal{O}(n^{6k+4})$ in the class of interval graphs where $n$ is the order of the input graph. This follows from recent results due to Kang et al.~\cite{Kang2017arXiv}, building on previous works by Bui-Xuan et al.~\cite{MR3126918} and Belmonte and Vatshelle~\cite{MR3126917}. In fact, Kang et al.~studied a more general class of problems, called $(\rho,\sigma)$-domination problems, and showed that
every such problem can be solved in time $\mathcal{O}(n^{6d+4})$
in the class of $n$-vertex interval graphs, where $d$ is a parameter associated to the problem (see Corollary 3.2 in~\cite{Kang2017arXiv} and the paragraph following it). The value of parameter $d$ for $k$-domination and total $k$-domination equals $k$, yielding the claimed time complexity.
%Polynomial-time solvability of $(\rho,\sigma)$-domination problems where $\rho$ and $\sigma$ are either finite or co-finite was shown for the class of interval graphs by Kang et al.~\cite{Kang2017arxiv}, building on previous works by Bui-Xuan et al.~\cite{MR3126918} and by Belmonte and Vatshelle~\cite{MR3126917}.}

%\bigskip
\subsection{%\noindent{\bf
Our Results and Approach}

\begin{sloppypar}
To the best of our knowledge, the only known polynomial-time algorithms for $k$-domination and total $k$-domination for a general (fixed) $k$ in the class of interval graphs follow from the above-mentioned work of Kang et al.~\cite{Kang2017arXiv} and run in time $\mathcal{O}(n^{6k+4})$. We significantly improve the above result for the case of proper interval graphs. We show that for each positive integer $k$, the $k$-domination and total $k$-domination problems are solvable in time $\mathcal{O}(n^{3k})$ in the class of proper interval graphs. Except for $k = 1$, this significantly improves on the best known running time.

% of fastest algorithms for these problems known so far.
%We improve this result by developing $\mathcal{O}(n^{3k})$ algorithms for the two problems, for each fixed $k\ge 1$.
Our approach is based on a reduction showing that for each positive integer $k$, the total $k$-domination problem on a given proper interval graph $G$ can be reduced to a shortest path computation in a derived edge-weighted directed acyclic graph. A similar reduction works for $k$-domination. The reductions immediately result in algorithms with running time $\mathcal{O}(n^{4k+1})$. We then show that with a suitable implementation the running time can be improved to $\mathcal{O}(n^{3k})$. The algorithms can be easily adapted to the weighted case, at no expense in the running time.
\end{sloppypar}

An extended abstract of this work appeared in the Proceedings of ISCO 2018~\cite{ISCO2018}.

\subsection{%\noindent{\bf
Related Work}

We now give an overview of related work and compare our results with the most related other results, in addition to those due to Kang et al.~\cite{Kang2017arXiv}, which motivated this work.

\begin{sloppypar}
\medskip
\noindent{\bf Overview.}
Several results on the complexity of $k$-domination and  total
$k$-domination problems were established in the literature. For every $k$, the $k$-domination problem is NP-hard in the classes of bipartite graphs~\cite{BFH2017} and split graphs~\cite{MR3043104}. The problem is solvable in linear time in the class of graphs every block of which is a clique, a cycle or a complete bipartite graph (including trees, block graphs, cacti, and block-cactus graphs)~\cite{MR3043104}, and, more generally, in any class of graphs of bounded clique-width~\cite{MR2232389,MR1739644} (see also~\cite{MR3197779}).
%More complexity results are known for total $k$-domination.
For \hbox{every} \hbox{positive} integer $k$, the total $k$-domination problem is NP-hard in the classes of split graphs~\cite{MR2960327}, doubly chordal graphs~\cite{MR2960327}, bipartite graphs~\cite{MR2960327}, undirected path graphs~\cite{MR3215459}, and, for $k\in \{2,3\}$, also in the class of bipartite planar graphs~\cite{ALT2016+}. The problem is solvable in linear time in the class of graphs every block of which is a clique, a cycle, or a complete bipartite graph~\cite{MR3215459}, and, more generally, in any class of graphs of bounded clique-width~\cite{MR2232389,MR1739644}, and in polynomial time in the class of chordal bipartite graphs~\cite{MR2960327}.
$k$-domination and total $k$-domination problems were also studied with respect to their (in)approximability properties, both in general~\cite{MR3027964} and in restricted graph classes~\cite{BFH2017}, as well as from the parameterized complexity point of view~\cite{MR3556042,MR3193864}.
\end{sloppypar}

%For every \hbox{positive} integer $k$, the $k$-domination problem is NP-hard in the classes of bipartite graphs~\cite{BFH2017} and split graphs~\cite{MR3043104}, but solvable in linear time in the class of graphs every block of which is a clique, a cycle or a complete bipartite graph (including trees, block graphs, cacti, and block-cactus graphs)~\cite{MR3043104}, and, more generally, in any class of graphs of bounded clique-width~\cite{MR2232389,MR1739644} (see also~\cite{MR3197779}).
%The total $k$-domination problem is NP-hard in the classes of split graphs~\cite{MR2960327}, doubly chordal graphs~\cite{MR2960327}, bipartite graphs~\cite{MR2960327}, undirected path graphs~\cite{MR3215459}, and bipartite planar graphs (for $k\in \{2,3\}$)~\cite{ALT2016+}, and solvable in linear time in the class of graphs every block of which is a clique, a cycle, or a complete bipartite graph~\cite{MR3215459}, and, more generally, in any class of graphs of bounded clique-width~\cite{MR2232389,MR1739644}, and in
%polynomial time in the class of chordal bipartite graphs~\cite{MR2960327}.
%The $k$-domination and the total $k$-domination problems were also studied with respect to their (in)approximability properties, both in general~\cite{MR3027964} and in restricted graph classes~\cite{BFH2017}, as well as from the parameterized complexity point of view~\cite{MR3556042,MR3193864}.

\begin{sloppypar}
Besides $k$-domination and total $k$-domination, other variants of domination problems solvable in polynomial time in the class of proper interval graphs (or in some of its superclasses) include $k$-tuple domination for all $k\ge 1$~\cite{MR1979859} (see also~\cite{MR2427750} and, for $k = 2$,~\cite{MR2861126}), connected domination~\cite{MR942582}, independent domination~\cite{MR687354}, paired domination~\cite{MR2354752}, efficient domination~\cite{MR1362658}, liar's domination~\cite{MR3095468}, restrained domination~\cite{MR3349893}, eternal domination~\cite{MR3327095}, power domination~\cite{MR3010415}, outer-connected domination~\cite{MR3385035}, Roman domination~\cite{MR2467312}, Grundy domination~\cite{MR3586754}, etc.
\end{sloppypar}

\medskip
\noindent{\bf Comparison.} Bertossi~\cite{MR873860} showed how to reduce the total domination problem in a given interval graph to a shortest path computation in a derived edge-weighted directed acyclic graph satisfying some additional constraints on pairs of consecutive arcs. A further transformation reduces the problem to a usual (unconstrained) shortest path computation. Compared to the approach of Bertossi, our approach exploits the additional structure of {\it proper} interval graphs in order to gain generality in the problem space. Our approach works for every $k$ and is also more direct, in the sense that the (usual or total, unweighted or weighted) $k$-domination problem in a given proper interval graph is reduced to a shortest path computation in a derived edge-weighted directed acyclic graph in a single, unified step.

The works of Liao and Chang~\cite{MR1979859} and of Lee and Chang~\cite{MR2427750} consider various domination problems in the class of strongly chordal graphs (and, in the case of~\cite{MR1979859}, also dually chordal graphs). While the class of strongly chordal graphs generalizes the class of interval graphs, the domination problems studied in~\cite{MR1979859,MR2427750} all deal with closed neighborhoods, and for those cases structural properties of strongly chordal and dually chordal graphs are helpful for the design of linear-time algorithms. In contrast, $k$-domination and total $k$-domination are defined via open neighborhoods and results of~\cite{MR1979859,MR2427750} do not seem to be applicable or easily adaptable to our setting.

\bigskip
\noindent{\bf Structure of the paper.}
In Section~\ref{sec:tkd-construction}, we describe the reduction for the total $k$-domination problem. The specifics of the implementation resulting in improved running time are given in Section~\ref{sec:time}.
In Section~\ref{sec:kd}, we discuss how the approach can be modified to solve the $k$-domination problem. Extensions to the weighted cases are presented in Section~\ref{sec:kd-weighted}. We conclude the paper with some open problems in Section~\ref{sec:conclusion}.

\medskip
In the rest of the section, we fix some definitions and notation. Given a graph $G$ and a set $X\subseteq V(G)$, we denote by $G[X]$ the subgraph of $G$ induced by $X$ and by $G-X$ the subgraph induced by $V(G)\setminus X$. For a vertex $u$ in a graph $G$, we denote by $N(u)$ the set of neighbors of $u$ in $G$.
%The \emph{degree} of a vertex $u$ in a graph $G$ is the number of its neighbors and the \emph{minimum degree} of a graph is the minimum of the degrees of its vertices.
Note that for every graph $G$, the set $V(G)$ is a $k$-dominating set, while $G$ has a total $k$-dominating set if and only if
every vertex of $G$ has at least $k$ neighbors.
For notions not defined here, we refer the reader to~\cite{MR1367739,MR2572804}.

\section{The Reduction for Total $k$-Domination}\label{sec:tkd-construction}

Let $k$ be a positive integer and $G = (V,E)$ a given proper interval graph. We may assume that $G$ is equipped with a proper interval model ${\cal I} = \{I_j\mid j=1,\ldots,n\}$ where $I_j=\left[a_j,b_j\right]$ for all $j = 1,\ldots, n$. We may also assume that no two intervals coincide. Moreover, since in a proper interval model the order of the left endpoints equals the order of the right endpoints, we assume that the intervals are sorted increasingly according to their left endpoints, i.e., $a_1<\ldots<a_n$.
We use notation $I_j<I_\ell$ if $j<\ell$ and say in this case that $I_j$ is \emph{to the left of} $I_\ell$ and $I_\ell$ is \emph{to the right of} $I_j$.
Also, we write $I_j\le I_\ell$ if $j\le \ell$. Given three intervals $I_j, I_\ell, I_m\in {\cal I}$, we say that interval $I_\ell$ is \emph{between} intervals $I_j$ and $I_m$ if $j<\ell<m$. We say that interval $I_j$ \emph{intersects} interval $I_\ell$ if $I_j\cap I_\ell\neq \emptyset$.

Our approach can be described as follows. Given $G$, we compute an edge-weighted directed acyclic graph $D^t_{k}$ (where the superscript ``$t$'' means ``total'' and $k$ is the constant specifying the problem) and show that the total $k$-domination problem on $G$ can be reduced to a shortest path computation in $D^t_{k}$. In what follows, we first give the
definition of digraph $D^t_{k}$ and illustrate the construction on an example (Example~\ref{example}). Next we explain the intuition behind the reduction and, finally, prove the correctness of the reduction.

\medskip
To distinguish the vertices of $D^t_{k}$ from those of $G$, we refer to them as \emph{nodes}. Vertices of $G$ are typically denoted by $u$ or $v$, and nodes of $D^t_{k}$ by $s, s',s''$.
Each node of $D^t_{k}$ is a sequence of intervals from the set ${\cal I}'= {\cal I}\cup \{I_0,I_{n+1}\}$,
where $I_0$, $I_{n+1}$ are two new, ``dummy'' intervals
such that $I_0< I_1$, $I_0\cap I_1 = \emptyset$, $I_{n}<I_{n+1}$, and $I_n\cap I_{n+1} = \emptyset$.
We naturally extend the linear order $<$ on ${\cal I}$ to the whole set ${\cal I}'$. We say that an interval $I\in {\cal I}'$ is \emph{associated with} a node $s$ of $D^t_{k}$ if it appears in sequence $s$. Given a node $s$ of $D^t_{k}$, we denote the set of all intervals associated with $s$ by ${\cal I}_s$. The first and the last interval in ${\cal I}_s$ with respect to ordering $<$ of ${\cal I}'$ are denoted by $\min(s)$ and~$\max(s)$, respectively. A sequence $(I_{i_1},\ldots, I_{i_q})$ of intervals from ${\cal I}$ is said to be \emph{increasing} if $i_1<\ldots <i_q$.

The node set of $D^t_{k}$ is given by $V(D^t_{k}) = \{I_0,I_{n+1}\}\cup S\cup B$, where:
\begin{itemize}
\item $I_0$ and $I_{n+1}$ are sequences of intervals of length one.\footnote{This assures that the intervals $\min(s)$ and $\max(s)$ are well defined also for $s\in \{I_0, I_{n+1}\}$, in which case both are equal to~$s$.}

\smallskip
\item $S$ is the set of so-called \emph{small nodes}. Set $S$  consists exactly of those
increasing sequences $s = (I_{i_1},\ldots, I_{i_q})$ of (not necessarily consecutive) intervals from ${\cal I}$ such that:
\begin{enumerate}[(1)]
  {\setlength\itemindent{0.2cm}\item $k+1\le q \le 2k-1$,}
  {\setlength\itemindent{0.2cm}\item $I_{i_j}\cap I_{i_{j+1}}\neq\emptyset$ for all $j\in \{1,\ldots, q-1\}$, and}
  {\setlength\itemindent{0.2cm}\item every interval $I\in {\cal I}$  such that $\min(s)\le I\le \max(s)$ intersects at least $k$ intervals from the set ${\cal I}_s\setminus\{I\}$.}
\end{enumerate}

\smallskip
\item $B$ is the set of so-called \emph{big nodes}. Set $B$ consists exactly of those increasing sequences $s = (I_{i_1},\ldots, I_{i_{2k}})$ of (not necessarily consecutive) intervals from ${\cal I}$ of length $2k$ such that:
\begin{enumerate}[(1)]
  {\setlength\itemindent{0.2cm}\item $I_{i_j}\cap I_{i_{j+1}}\neq\emptyset$ for all $j\in \{1,\ldots, 2k-1\}$ and
  }{\setlength\itemindent{0.2cm}\item every interval $I\in {\cal I}$  such that $I_{i_k}\le I \le I_{i_{k+1}}$ intersects at least $k$ intervals from the set
  ${\cal I}_s\setminus\{I\}$.}
\end{enumerate}
\end{itemize}

The arc set of $D^t_{k}$ is given by $E(D^t_{k}) = E_0\cup E_1$, where:
\begin{itemize}
\item Set $E_0$ consists exactly of those ordered pairs $(s,s')\in V(D^t_{k})\times V(D^t_{k})$ such that:
\begin{enumerate}[\indent(1)]
  {\setlength\itemindent{0.2cm}\item $\max(s)< \min(s')$ and $\max(s)\cap \min(s') = \emptyset$,
  }{\setlength\itemindent{0.2cm}\item every interval $I\in {\cal I}$ such that $\max(s)<I<\min(s')$ intersects at least $k$ intervals from
  ${\cal I}_s\cup {\cal I}_{s'}$,
  }{\setlength\itemindent{0.2cm}\item if $s$ is a big node, then the rightmost $k+1$ intervals associated with $s$ pairwise intersect, and
  }{\setlength\itemindent{0.2cm}\item if $s'$ is a big node, then the leftmost $k+1$ intervals associated with $s'$ pairwise intersect.}
\end{enumerate}

\smallskip
\item Set $E_1$ consists exactly of those ordered pairs $(s,s')\in V(D^t_{k})\times V(D^t_{k})$ such that
$s,s'\in B$ and
there exist $2k+1$ intervals $I_{i_1},\ldots, I_{i_{2k+1}}$ in ${\cal I}$ such that
$s = (I_{i_1},I_{i_2},\ldots, I_{i_{2k}})$ and $s' = (I_{i_2},I_{i_3},\ldots, I_{i_{2k+1}})$.
\end{itemize}

To every arc $(s,s')$ of $D^t_{k}$ we associate a non-negative length $\ell(s,s')$, defined as follows:
\begin{equation}\label{eq:length}\tag{$\ast$}
\ell(s,s')=\left \{
\begin{array}{ll}
|{\cal I}_{s'}|, & \hbox{if $(s,s')\in E_0$ and $s'\neq I_{n+1}$;}\\
1, & \hbox{if $(s,s')\in E_1$;}\\
0, & \hbox{otherwise.}
\end{array}
\right.
\end{equation}
The length of a directed path in $D^t_{k}$ is defined, as usual, as the sum of the lengths of its arcs.

\begin{example}\label{example}
\begin{sloppypar}
Consider the problem of finding a minimum total $2$-dominating set in the graph $G$ given by the proper interval model ${\cal I}$ depicted in Figure~\ref{fig:example}$(a)$. Using the reduction described above, we obtain the digraph $D^t_{2}$ depicted in Figure~\ref{fig:example}$(c)$ along with the length function on arcs, where, for clarity, nodes $(I_{i_1},\ldots, I_{i_p})$ of $D^t_{2}$ are identified with the corresponding strings of indices $i_1i_2\ldots i_p$. We also omit in the figure the (irrelevant) nodes that do not belong to any directed path from $I_0$ to $I_{n+1}$.
There is a unique shortest $I_0,I_{9}$-path in $D^t_{2}$, namely $(0,2356,3567,9)$. The path corresponds to  $\{2,3,5,6,7\}$, the only minimum total $2$-dominating set in $G$. %Other $I_0,I_{9}$-paths give rise to other total $2$-dominating sets, and vice versa.

\begin{figure}[h!]
  \centering
  % Requires \usepackage{graphicx}
  \includegraphics[width=0.91\textwidth]{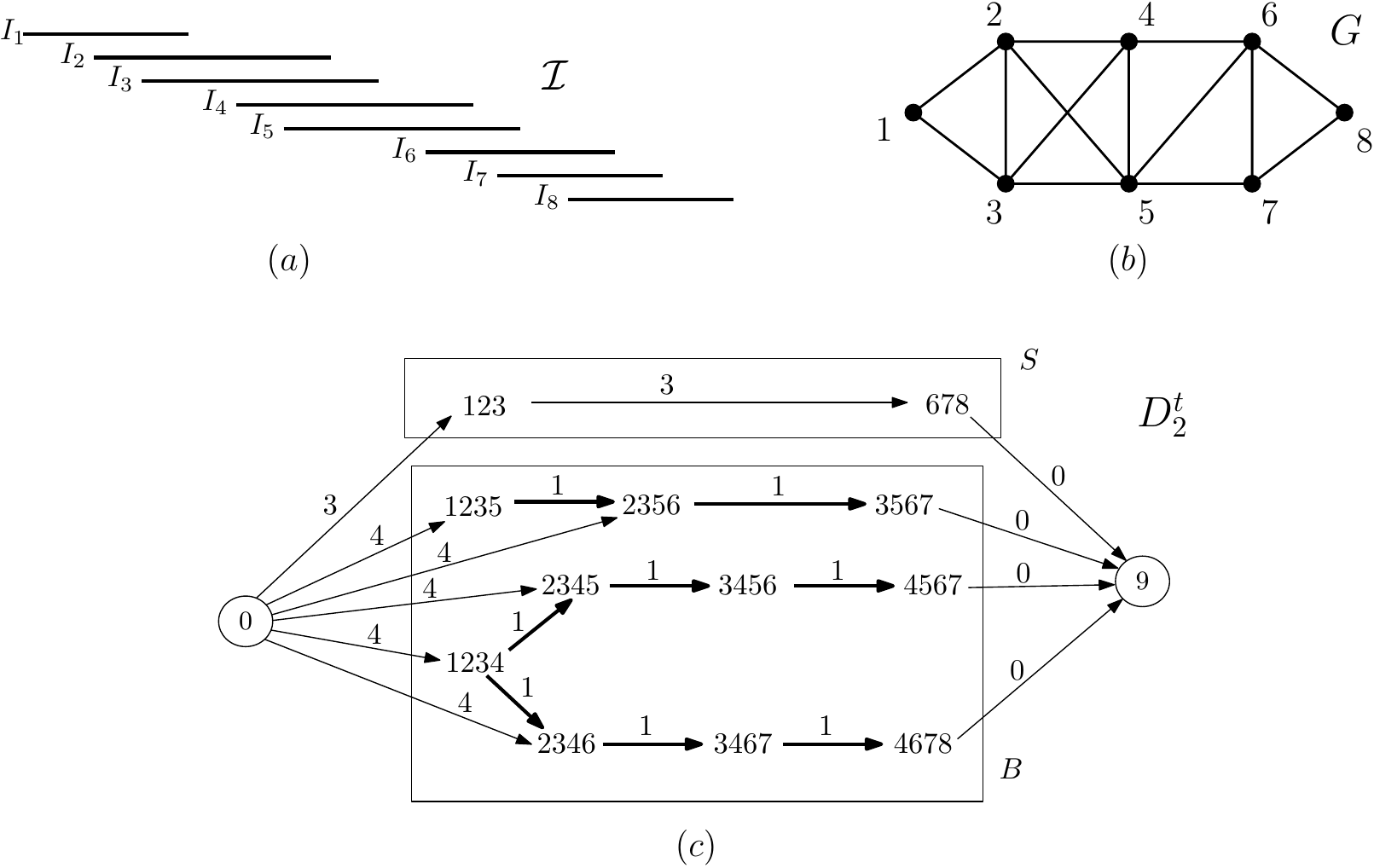}\\
  \caption{$(a)$ A proper interval model ${\cal I}$, $(b)$ the corresponding proper interval graph $G$, and $(c)$ a part of the derived digraph $D^t_{2}$, where only nodes that lie on some directed path from $I_0$ to $I_{9}$ are shown.
  %Nodes of $D^t_2$ omitted in the drawing are:
    Edges in $E_1$ are depicted bold.}\label{fig:example}
\end{figure}
\end{sloppypar}
\end{example}

\smallskip

\begin{sloppypar}
\begin{lemma}\label{lem:components}
Given a graph $G$ and a positive integer $k$, let $U$ be a total $k$-dominating set in $G$, and let $C$ be a component of $G[U]$. Then $|V(C)|\ge k+1$.
\end{lemma}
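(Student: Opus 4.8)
The plan is to prove this by contradiction, leveraging the defining property of a total $k$-dominating set. Suppose $C$ is a component of $G[U]$ with $|V(C)| \le k$. The key observation is that every vertex $v \in V(C)$ must have at least $k$ neighbors in $U$ (since $U$ is total $k$-dominating and this applies to \emph{every} vertex of $G$, including those in $U$). I will argue that these $k$ neighbors are forced to lie inside $C$, which is impossible because $C$ is too small.

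The main step is to show that all neighbors of $v$ that belong to $U$ actually belong to the same component $C$. Indeed, if $w \in U$ is a neighbor of $v$, then $vw$ is an edge of $G$, and since both $v, w \in U$, the edge $vw$ is present in the induced subgraph $G[U]$. Hence $w$ lies in the same connected component of $G[U]$ as $v$, namely $C$. Therefore $N(v) \cap U \subseteq V(C)$. Combined with $|N(v) \cap U| \ge k$, this yields $|V(C)| \ge |N(v) \cap U| \ge k$, but we must push to $k+1$: since $v \in V(C)$ and $v \notin N(v)$ (no self-loops), the set $V(C)$ contains $v$ together with its at least $k$ neighbors in $C$, which are all distinct from $v$. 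Thus $|V(C)| \ge k + 1$, contradicting the assumption $|V(C)| \le k$ and completing the proof.

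The argument is essentially immediate once one notes the crucial point that the total $k$-domination condition constrains \emph{every} vertex, including those already in $U$. There is no real obstacle here; the only subtlety to state carefully is that the $k$ neighbors guaranteed by the definition lie within the component $C$ rather than elsewhere in $U$, which follows directly from the definition of connected components of $G[U]$. The final counting step that accounts for $v$ itself (giving $k+1$ rather than merely $k$) should be made explicit to obtain the stated bound.
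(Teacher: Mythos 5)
Your proof is correct and follows essentially the same argument as the paper: pick a vertex of $C$, observe that its at least $k$ neighbors in $U$ must lie in $C$ itself, and count that vertex together with those neighbors to get $|V(C)|\ge k+1$. The contradiction framing is superfluous (the argument is direct), but this is a purely cosmetic difference.
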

\end{sloppypar}

\begin{proof}
Let $u\in V(C)$. Since $U$ is a total $k$-dominating set, $u$ has at least $k$ neighbors in $U$.
However, since $C$ is a component of $G[U]$, all the neighbors of $u$ in $U$ are in fact in $C$.
Thus, $|V(C)|\ge |\{u\}\cup (N(u)\cap U)|\ge k+1$.
\end{proof}

The following proposition
%simple lemma will be useful for the proof of Proposition \ref{prop:correctness}, which
establishes the correctness of the reduction.

\begin{proposition}\label{prop:correctness}
Given a proper interval graph $G$ and a positive integer $k$, let $D^t_{k}$ be the directed graph constructed as above.
Then $G$ has a total $k$-dominating set of size $c$ if and only if $D^t_{k}$ has a directed path from $I_0$ to $I_{n+1}$ of length $c$.
\end{proposition}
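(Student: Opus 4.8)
The plan is to prove the two implications separately by exhibiting, in each direction, a \emph{length-preserving} translation between total $k$-dominating sets of $G$ and directed $I_0$--$I_{n+1}$ paths in $D^t_k$. The dictionary is that a set $U$ and a path $P$ correspond when $U$ is exactly the union of the interval-sets $\mathcal{I}_s$ over the internal nodes $s$ of $P$. The mental model driving every step is that each connected component $C$ of $G[U]$ is represented by one contiguous stretch of $P$: a component with $k+1\le |V(C)|\le 2k-1$ vertices by a single small node, and a component with $|V(C)|\ge 2k$ vertices by a maximal $E_1$-chain of big nodes that slides a width-$2k$ window across $C$, advancing one interval per arc; distinct components and the two dummy endpoints are stitched together by $E_0$ arcs. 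Lemma~\ref{lem:components} is what guarantees every component is large enough to be representable this way.

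For the forward direction I would fix a total $k$-dominating set $U$ of size $c$, order the components $C_1<\cdots<C_r$ of $G[U]$ by left endpoints, and build the path prescribed by the dictionary. The substance is to certify that each object produced really is a node of $D^t_k$ and each consecutive pair really is an arc. Each such certificate is a statement that some interval $I$ meets at least $k$ intervals lying in a prescribed local window (a node, or the union $\mathcal{I}_s\cup\mathcal{I}_{s'}$ across an arc); I would derive these from the hypothesis ``$I$ has at least $k$ neighbors in $U$'' by the localization argument described below. The length accounting is then routine: an $E_0$ arc into a node $s'$ pays $|\mathcal{I}_{s'}|$, opening a fresh component; each $E_1$ arc pays $1$ for the single newly revealed interval; and the terminal arc into $I_{n+1}$ pays $0$. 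Since the components are pairwise disjoint and an $E_1$-chain over a size-$m$ component contributes $2k+(m-2k)=m$, the total length is $\sum_i |V(C_i)| = c$.

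For the backward direction I would take an arbitrary $I_0$--$I_{n+1}$ path $P$ of length $c$, set $U=\bigcup_s \mathcal{I}_s$ over the internal nodes, and verify both $|U|=c$ and that $U$ is total $k$-dominating. The equality $|U|=c$ simply reverses the accounting above: condition~(1) for arcs in $E_0$ forces $\max(s)<\min(s')$ with disjoint extreme intervals, so the $|\mathcal{I}_{s'}|$-weights count disjoint fresh blocks, while an $E_1$-step shares $2k-1$ intervals and contributes exactly one new one. To see that $U$ is total $k$-dominating I would take an arbitrary interval $I\in\mathcal{I}$ and locate it along $P$: if $I$ lies in the span of a small node, or in the middle region $I_{i_k}\le I\le I_{i_{k+1}}$ of some big node of $P$, the node condition gives $k$ intersecting intervals outright; if $I$ lies strictly between two consecutive nodes, condition~(2) of the joining $E_0$ arc does; and if $I$ lies in the leftmost or rightmost width-$(k+1)$ block of a big node, the pairwise-intersection conditions~(3)--(4) of the flanking $E_0$ arc supply $k$ neighbors among the $k+1$ pairwise-intersecting endpoints. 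In every case the certified intervals lie in $\mathcal{I}_s\cup\mathcal{I}_{s'}\subseteq U$.

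The step I expect to be the main obstacle is exactly the \emph{localization} used in the forward direction together with its dual \emph{coverage} claim in the backward direction, and both rest on two elementary features of a proper interval model. The first is that the set $N(I)\cup\{I\}$ is a contiguous block in the left-to-right order, so the neighbors of $I$ in $U$ form a contiguous run straddling the position of $I$ (and, via the same betweenness observation, all lie in a single component); this is what lets me argue that whenever the run spills past a width-$2k$ window on one side, the $k$ window-intervals on that side are already all neighbors, so at least $k$ neighbors always fall inside the window. The second is the Helly-type fact that a family of intervals pairwise intersects as soon as its leftmost and rightmost members do, which I would use to turn ``the extreme interval of a component has $k$ neighbors'' into the pairwise-intersection conditions~(3)--(4). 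Pinning down that the sliding middle regions $[I_{i_k},I_{i_{k+1}}]$ overlap so as to tile the interior of each large component, with the two $E_0$ boundary blocks sealing its ends and the between-node regions covering the gaps, is the delicate bookkeeping at the heart of the argument.
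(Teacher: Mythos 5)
Your proposal is correct and follows essentially the same route as the paper's proof: the same component-by-component dictionary (small components as single small nodes, large components as sliding-window $E_1$-chains of big nodes stitched by $E_0$ arcs), the same contiguity/localization argument from the proper interval model to certify the node and arc conditions, the same length accounting, and the same case analysis (middle regions of big nodes, gaps covered by $E_0$ condition (2), and the extreme width-$(k+1)$ blocks covered by conditions (3)--(4)) for the backward direction. The delicate points you flag --- the spill-over localization and the tiling of each large component by the middle regions together with the two boundary blocks --- are exactly the steps the paper works out in detail, and they go through as you describe.
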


Before giving a detailed proof of Proposition~\ref{prop:correctness}, we explain the intuition behind the reduction. The subgraph of $G$ induced by a minimum total $k$-dominating set may contain several connected components. These components as well as vertices within them are naturally ordered from left to right. Moreover, since each connected subgraph of a proper interval graph has a Hamiltonian path, the nodes of $D^t_k$
correspond to paths in $G$, see condition (2) for small nodes or condition (1) for big nodes.
Since each vertex of $G$ has at least $k$ neighbors in the total $k$-dominating set, each component has at least $k+1$ vertices.
Components with at least $2k$ vertices give rise to directed paths in $D^t_k$ consisting of big nodes and arcs in $E_1$.
Each component with less than $2k$ vertices corresponds to a unique small node in $D^t_k$, which can be seen as a trivial directed path in $D^t_k$. The resulting paths inherit the left-to-right ordering from the components. Any two consecutive paths (with respect to this ordering) are joined in $D^t_k$ by an arc in $E_0$. Moreover, $I_0$ is joined to the leftmost node of the leftmost path by an arc in $E_0$ and, symmetrically,
the rightmost node of the rightmost path is joined to $I_{n+1}$ by an arc in $E_0$. Adding such arcs yields a directed path from $I_0$ to $I_{n+1}$ of the desired length.

The above process can be reversed. Given a directed path $P$ in $D^t_k$ from $I_0$ to $I_{n+1}$, a total $k$-dominating set in $G$ of the desired size can be obtained as the set of all vertices corresponding to intervals in ${\cal I}$ associated with the internal nodes of $P$. Note that our construction of the graph $D^t_k$ implies that such a set is indeed a total $k$-dominating set in $G$. For example, condition (3) from the definition of arcs in $E_0$
guarantees that the vertex corresponding to the rightmost interval associated with $s\in B$ (where $(s,s')\in E_0$) is $k$-dominated. The condition is related to the fact that in proper interval graphs the neighborhood of a vertex represented by an interval $[a,b]$ splits into two cliques: one for all intervals containing $a$ and another one for all intervals containing $b$.

\medskip

The digraph $D_k^t$ has $\mathcal{O}(n^{2k})$ nodes and $\mathcal{O}(n^{4k})$ arcs and can be, together with the length function $\ell$ on its arcs, computed from $G$ directly from the definition in time $\mathcal{O}(n^{4k+1})$. A shortest directed path (with respect to $\ell$) from $I_0$ to all nodes reachable from $I_0$ in $D^t_{k}$ can be computed in polynomial time using any of the standard approaches, for example using Dijkstra's algorithm. Actually, since $D^t_{k}$ is acyclic, a dynamic programming approach along a topological ordering of $D^t_{k}$ can be used to compute shortest paths from $I_0$ in linear time (in the size of  $D^t_{k}$).
Proposition~\ref{prop:correctness} therefore implies that the total $k$-domination problem is solvable in time $\mathcal{O}(n^{4k+1})$
in the class of $n$-vertex proper interval graphs.

We will show in Section~\ref{sec:time} that, with a careful implementation, a shortest $I_0, I_{n+1}$-path in $D^t_{k}$ can be computed without examining all the arcs of the digraph, leading to an improved running time of $\mathcal{O}(n^{3k})$.

\subsection*{Proof of Proposition~\ref{prop:correctness}}

We assume all notation up to Example~\ref{example} and, additionally, fix the following notation useful for both directions of the proof: for $X\subseteq V(G)$, we denote by ${\cal I}^X$ the set of intervals in ${\cal I}$ corresponding to vertices in $X$.

\medskip
First we establish the forward implication. Suppose that $G$ has a total $k$-dominating set $U$ of size $c$.
The components of $G[U]$ can be naturally ordered from left to right, say as
$C_1,\ldots, C_r$. To each component $C_i$ we will associate a path $P^i$ in $D^t_{k}$ defined as a sequence of nodes.
The desired directed path $P$ from $I_0$ to $I_{n+1}$ in $D^t_{k}$ will be then obtained
by combining the paths $P^i$ into a single sequence of nodes preceded by $I_0$ and followed by $I_{n+1}$.

We say that a component $C_i$ is \emph{small} if $|V(C_i)|<2k$ and
\emph{big}, otherwise. To every small component $C_i$ we associate a sequence $s^i=(I_{i_{1}}, \ldots, I_{i_{p}})$, consisting of the $p=|V(C_i)|$ intervals corresponding to the vertices of $C_i$, ordered increasingly. We claim that $s^i$ is a small node of $D^t_{k}$. By Lemma~\ref{lem:components}, $C_i$ has at least $k+1$ vertices, which, together with the fact that $C_i$ is small, implies property (1) of the definition of a small node. Property (2) follows from the fact that ${\cal I}$ is a proper interval model of $G$ and $C_i$ is connected. To show that $s^i$ satisfies property (3) of the definition of a small node, consider an interval $I \in {\cal I}$ such that $\min(s^i) \le I \le \max(s^i)$. Let $u$ be the vertex of $G$ corresponding to $I$. Since $U$ is a total $k$-dominating set, vertex $u$ has at least $k$ neighbors in $U$. Since $\min(s^i) \le I \le \max(s^i)$, all the neighbors of $u$ in $U$ must belong to $C_i$, more specifically, $N(u) \cap U \subseteq V(C_i) \setminus \{u\}$. It follows that $I$ intersects at least $k$ intervals from the set ${\cal I}_{s^i} \setminus \{I\}$, establishing also property (3) and with it the claim that $s^i$ is a small node of $D^t_{k}$.
The path $P^i$ associated to component $C_i$ is the one-node path having $s^i$ as a node.

Let now $C_i$ be a big component and let $p = |V(C_i)|$. Then $p\ge 2k$.
Let $I_{j_1},\ldots,  I_{j_p}$ be the intervals corresponding to the vertices of $C_i$, ordered increasingly.
For every $q \in \{1,\ldots, p-2k+1\}$, let $s^i_q$ denote the subsequence of these intervals of length $2k$ starting at $q$-th interval,
that is, $s^i_q = (I_{j_q}, I_{j_{q+1}},\ldots, I_{j_{q+2k-1}})$. We claim that for each $q\in \{1,\ldots, p-2k+1\}$, sequence $s^{i}_{q}$ is a big node of $D^t_{k}$. Property (1) follows from the fact that ${\cal I}$ is a proper interval model of $G$ and $C_i$ is connected.  To show that $s^i_q$ satisfies property (2) of the definition of a big node, consider an interval $I \in {\cal I}$ such that $I_{j_{q+k-1}} \le I \le I_{j_{q+k}}$ and let $u$ be the vertex of $G$ corresponding to $I$. Since $U$ is a total $k$-dominating set, vertex $u$ has at least $k$ neighbors in $U$. Since $I_{j_{1}} \le I \le I_{j_{p}}$, all the neighbors of $u$ in $U$ must belong to $C_i$. It follows that $I$ intersects at least $k$ intervals from the set $\{I_{j_{1}}, \ldots, I_{j_{p}}\}\setminus\{I\}$. Suppose for a contradiction that $I$ intersects strictly less that $k$ intervals from the set ${\cal I}_{s^i_q}\setminus\{I\}=\{I_{j_{q}}, I_{j_{q+1}}, \ldots, I_{j_{q+2k-1}}\}\setminus\{I\}$. Since $I$ intersects at least $k$ intervals from the set $\{I_{j_{1}}, \ldots, I_{j_{p}}\}\setminus\{I\}$, there is an interval, call it $I'$, in the set $\{I_{j_{1}}, \ldots, I_{j_{p}}\} \setminus {\cal I}_{s^i_q}$ such that $I \cap I' \neq \emptyset$. Note that
$\{I_{j_{1}}, \ldots, I_{j_{p}}\} \setminus {\cal I}_{s^i_q} = {\cal I}_1\cup {\cal I}_2$ where
${\cal I}_1 = \{I_{j_1}, \ldots, I_{j_{q-1}}\}$ and ${\cal I}_2 = \{I_{j_{q+2k}}, \ldots, I_{j_{p}}\}$. Suppose first that $I'\in {\cal I}_1$, that is, $I' = I_{j_{\alpha}}$ for some $\alpha\in \{1,\ldots, q-1\}$.
Since ${\cal I}$ is a proper interval model of $G$, conditions $I_{j_{q+k-1}}\le I\le I_{j_{q+k}}$ and $I_{j_{q+k-1}}\cap I_{j_{q+k}}\neq \emptyset$ imply that interval $I$ intersects each of the two intervals $I_{j_{q+k-1}}$ and $I_{j_{q+k}}$ (possibly $I \in \{I_{j_{q+k-1}}, I_{j_{q+k}}\}$). Similarly, the fact that $I$ intersects both $I'=I_{j_{\alpha}}$ and  $I_{j_{q+k}}$ implies that
$I$ also intersects each of the intervals in the set $\{I_{j_{\alpha}}, I_{j_{\alpha+1}},\ldots, I_{j_{q+k}}\}$; in particular, $I$ intersects each of the $k+1$ intervals in the set $\{I_{j_{q}}, I_{j_{q+1}},\ldots, I_{j_{q+k}}\}$, all of which are in ${\cal I}_{s^i_q}$. It follows that interval $I$ intersects at least $k$ intervals from the set ${\cal I}_{s^i_q}\setminus\{I\}$, contradicting the assumption on $I$. The case $I'\in {\cal I}_2$ is symmetric to the case $I'\in {\cal I}_1$. This establishes property (2) and with it the claim that $s^i_q$ is a big node of $D^t_{k}$.
The path associated to component $C_i$ is defined as $P^i=s^{i}_{1}, \ldots, s^{i}_{p-2k+1}$.

Let $P$ denote the sequence of nodes of $D^t_{k}$ obtained by combining the paths $P^i$ into a single sequence of nodes preceded by $I_0$ and followed by $I_{n+1}$ in the natural order $I_0, P^1, \ldots, P^r, I_{n+1}$. Since $U\neq \emptyset$, the paths $P^i$ are pairwise node-disjoint, and none of them contains $I_0$ or $I_{n+1}$, path $P$ has at least $3$ nodes. Moreover, note that for each node $s$ of $P$ other than $I_0$ and $I_{n+1}$, the vertices of $G$ corresponding to intervals associated with $s$ all belong to the same component of $G[U]$, call it $C_s$.

We claim that $P$ is a path in $D^t_{k}$, that is, that every two consecutive nodes of $P$ form an arc in $D^t_{k}$.
Consider a pair $s,s'$ of consecutive nodes of $P$. Clearly, $s\neq I_{n+1}$. We consider three subcases depending on whether $s=I_0$, $s \in S$, or $s \in B$.

Suppose first that $s=I_0$. We claim that $(s,s')\in E_0$. Property (1) of the definition of the edges in $E_0$ clearly holds, as does (vacuously) property (3) (since $I_0 \notin B$). To show property (2), consider an interval $I\in {\cal I}$ such that $\max(s)<I<\min(s')$. Since $U$ is a total $k$-dominating set, interval $I$ intersects at least $k$ intervals from ${\cal I}^U$. Since $I <\min(s')$ and $\min(s')$ is the leftmost interval corresponding to a vertex of $U$, it follows that $I$ intersects the $k$ leftmost intervals from ${\cal I}^U$. All these intervals belong to vertices from component $C_1$, and therefore to ${\cal I}_{s'}$. This establishes property (2).
A similar argument shows that if $s' \in B$, then in order to make sure that the vertex corresponding to $\min(s')$ has at least $k$ neighbors in $U$, interval $\min(s')$ must intersect the $k$ intervals associated with $s'$ immediately following $\min(s')$ in the sequence.
This implies that the leftmost $k+1$ intervals associated with $s'$ pairwise intersect, thus establishing property (4). It follows that $(s,s') \in E_0$, as claimed.

Suppose now that $s \in S$.
We claim that $(s,s')\in E_0$. Property (1) of the definition of the edges in $E_0$ follows from the construction of $P$ and the fact that $C_s \neq C_{s'}$. Property (2) follows from the construction of $P$ together with the fact that $U$ is a total $k$-dominating set in $G$. Property (3) is satisfied vacuously. A similar argument as in the case $s=I_0$ establishes property (4). It follows that $(s,s') \in E_0$, as claimed.

Suppose now that $s \in B$. If $s'\in S$, then we conclude that $(s,s') \in E_0$ by symmetry with the case $s\in S$, $s'\in B$.
If $s'=I_{n+1}$, then we conclude that $(s,s') \in E_0$ by symmetry with the case $s=I_0$.
Let now $s' \in B$. If $C_s\neq C_{s'}$, then we can use similar arguments as in the case $s\in S$ to show that $(s,s')\in E_0$.
If $C_s=C_{s'}$, let $i\in \{1,\ldots, r\}$ be the index such that $C_s=C_i$. The construction of $P$ implies that $s$ and $s'$ are nodes of $P^i$ such that $s=s^i_q$ and $s'=s^i_{q+1}$ for some $q \in \{1, \ldots, p-2k\}$ where $p=|V(C_i)|$. The definitions of $s^i_q$ and $s^i_{q+1}$ now imply that $(s,s')\in E_1$, showing in particular that $(s,s')$ is an arc of $D^t_{k}$.

This shows that $P$ is a directed path from $I_0$ to $I_{n+1}$ in $D^t_{k}$, as claimed.
Furthermore, the definition of the length function $\ell$ and the
construction of $P$ imply that the length of $P$ equals the size of $U$, which is $c$.

\medskip
Now we establish the backward implication. Suppose that $D^t_{k}$ has a directed path $P$ from $I_0$ to $I_{n+1}$ of length $c$. Let $U$ be the set of all vertices $u\in V(G)$ such that the interval corresponding to $u$ is associated with some node of $P$. We claim that $U$ is a total $k$-dominating set in $G$ of size $c$.

Note that since neither of $I_0$ and $I_{n+1}$ is a big node, $(I_0,I_{n+1})\not\in E_1$; moreover, $(I_0,I_{n+1})\not\in E_0$ since condition $(2)$ in the definition
of an arc in $E_0$ fails. Therefore, $(I_0,I_{n+1})\not\in E(D^t_{k})$ and $P$ has at least $3$ nodes.
The set of nodes of $P$ can be uniquely partitioned into consecutive sets of nodes, say $W_0,W_1,\ldots, W_r, W_{r+1}$, such that
each $W_i$ is the node set of a maximal subpath $P'$ of $P$ such that $E(P')\subseteq E_1$.
(Equivalently, the $W_i$'s are the vertex sets of the components of the undirected graph underlying the digraph $P-E_0$.)
Note that $W_0 = \{I_0\}$ and $W_{r+1} = \{I_{n+1}\}$.
For all $i\in \{0,1,\ldots, r+1\}$,
let $U_i$ be the set of vertices of $G$ corresponding to
intervals associated with nodes in $W_i$.

We claim that for every $0\le i<j\le r+1$ and for every pair of intervals
$I\in {\cal I}^{U_i}$ and $J\in {\cal I}^{U_j}$ we have $I<J$ and $I\cap J = \emptyset$.
This can be proved by induction on $d = j-i$. If $d = 1$, then let
$(s,s')$ be the unique arc of $P$ connecting a vertex in $W_i$ with a vertex in $W_{i+1}$.
By the definition of the $W_i$'s, we have $(s,s')\not\in E_1$ and therefore $(s,s')\in E_0$. This
implies that every interval associated with $s$ is
smaller (with respect to ordering $<$) and disjoint from every interval associated with $s'$.
Consequently, the definitions of $W_i$, $W_{i+1}$, $U_i$, $U_{i+1}$, and the properties of the arcs in $E_0$ imply that
every interval in ${\cal I}^{U_i}$ is smaller (with respect to ordering $<$) and disjoint from every interval
in ${\cal I}^{U_{i+1}}$. The inductive step follows from the transitivity of the relation
on ${\cal I}'$ in which interval $I$ is in relation with interval $J$ if and only if
$I<J$ and $I\cap J = \emptyset$.

The above claim implies that no edge of $G$ connects a vertex in $U_i$ with a vertex in $U_j$ whenever $1\le i<j\le r$.
More specifically, we claim that for every $i\in \{1,\ldots, r\}$,
the subgraph of $G$ induced by $U_i$ is a component of $G[U]$.
This can be proved using the properties of the arcs in $D^t_{k}$, as follows.
Let $i\in \{1,\ldots, r\}$. Suppose first that $W_i\cap S \neq\emptyset$. Since every arc in $E_1$ connects a pair of big nodes,
we infer that $W_i = \{s\}$ for some $s\in S$. Using property (2) in the definition of a small node, we infer that $G[U_i]$ is connected.
Therefore, since no edge of $G$ connects a vertex in $U_i$ with a vertex in $U_j$ for $j\neq i$,
we infer that $G[U_i]$ is a component of $G[U]$, as claimed.
Suppose now that $W_i\cap S = \emptyset$, that is, $W_i\subseteq B$. Let $P'$ be the subpath of $P$ such that $V(P') = W_i$. Since $P'$ consists only of big nodes and only of arcs in $E_1$, we can use property (1) in the definition of a big node to infer that $G[U_i]$ is connected.
It follows that $G[U_i]$ is a component of $G[U]$ also in this case.

Let us now show that the size of $U$ equals the length of $P$. This will imply that $|U|\le c$.
For every $i\in \{1,\ldots, r+1\}$, let $P^i$ be the subpath of $P$ of consisting of all the arcs of $P$
entering a node in $W_i$. By construction, the paths $P^1,\ldots, P^{r+1}$ are pairwise arc-disjoint
and their union is $P$. For every $i\in \{1,\ldots, r+1\}$, let $\ell_i$ denote the the length of $P^i$.
The definitions of the $W_i$'s and of the length function imply that $\ell_i = \ell(s_i,s_i')+|W_i|-1$, where $(s_i,s_i')$ is the (unique) arc of $P$ such that
$s_i\not\in W_i$ and $s_i'\in W_i$.
Since $(s_i,s_i')\in E_0$, we have
$$\ell(s_i,s_i') = \left\{
                    \begin{array}{ll}
                     |{\cal I}_{s_i'}|, & \hbox{if $i\in \{1,\ldots, r\}$;} \\
                      0, & \hbox{if $i = r+1$.}
                    \end{array}
                  \right.$$
It follows that $$\ell_i = \left\{
                                       \begin{array}{ll}
                                         |{\cal I}_{s_i'}|+|W_i|-1, & \hbox{if $i\in \{1,\ldots, r\}$;} \\
                                         0, & \hbox{if $i = r+1$.}
                                       \end{array}
                                     \right.$$
Furthermore, we have $|{\cal I}_{s_i'}|+|W_i|-1 = |U_i|$ for all $i\in \{1,\ldots, r\}$, which implies
$\ell_i = |U_i|$. Since the subgraph of $G$ induced by $U_i$ is a component of $G[U]$,
we have $|U| = \sum_{i = 1}^r|U_i| = \sum_{i = 1}^{r+1}\ell_i$, which is exactly the length of $P$ (this follows from the fact that
the paths $P^1,\ldots, P^{r+1}$ are pairwise arc-disjoint and their union is $P$). Therefore, $|U| = c$.

It remains to show that $U$ is a total $k$-dominating set of $G$, that is, that every vertex
$u\in V(G)$ has at least $k$ neighbors in $U$.
Let $u\in V(G)$, let $I\in {\cal I}$ be the interval corresponding to $u$.
We need to show that $I$ intersects at least $k$ intervals from the set
${\cal I}^U\setminus\{I\}$.
We consider two cases depending on whether $u\in U$ or not.

\medskip
\noindent \emph{Case 1. $u\in U$.}
In this case, $I\in{\cal I}^U$ and $I$ is associated with some node of $P$.
Let $s\in V(P)$ be such a node. Note that $s\not\in \{I_0,I_{n+1}\}$.
By construction of $U$, we have ${\cal I}_s \subseteq {\cal I}^U$.
Suppose first that $s$ is a small node. Then, condition $(3)$ in the definition of a small node implies that
$I$ intersects at least $k$ intervals from the set ${\cal I}_s\setminus\{I\}$,
which is a subset of ${\cal I}^U\setminus\{I\}$.

Suppose now that $s$ is a big node. There exists a unique $i\in \{1,\ldots, r\}$ such that $s\in W_i$.
Note that $W_i$ is the node set of a subpath of $P$, say $P'$, consisting only of big nodes and arcs in $E_1$.
Let $I_{j_1},\ldots, I_{j_p}$ be the intervals corresponding to the vertices in $U_i$, ordered
increasingly. The fact that all arcs of $P'$ are in $E_1$ imply that
%vertex and the edge sets of $P'$ are given by
$V(P') = W_i = \{s_q\mid 1\le q\le p-2k+1\}$ where $s_q = (I_{j_{q}}, \ldots, I_{j_{q+2k-1}})$
and $E(P') = \{(s_q,s_{q+1})\mid 1\le q\le p-2k\}$.
Moreover, there exists a unique index $q\in \{1,\ldots, p\}$ such that
$I = I_{j_q}$.

Suppose first that $q\le k+1$. Let $(s',s'')$ be the (unique) arc of $P$ such that $s'\not\in W_i$ and $s''\in W_i$.
Since $(s',s'')\in E_0$ and $s''\in B$, condition $(4)$ in the definition of an arc in $E_0$
guarantees that the intervals $I_{j_1}, \ldots, I_{j_{k+1}}$ pairwise intersect.
Clearly, this implies that interval $I$ intersects at least $k$ intervals from the set ${\cal I}^U\setminus\{I\}$.

The case when $p-q\le k$ is symmetric to the case $q\le k+1$ and
can be analyzed using condition $(3)$ in the definition of an arc in $E_0$.

\begin{sloppypar}
Suppose now that $k+1 < q <p-k$. Let $\alpha = q-k$ and let $s_\alpha = (I_{j_{\alpha}}, \ldots, I_{j_{\alpha+2k-1}})$.
Then $s_\alpha\in V(P')$ and $s_\alpha$ is a big node of $D^t_{k}$. Moreover, $I_{j_{\alpha+k-1}} = I_{j_{q-1}} < I_{j_{q}} = I = I_{j_{\alpha+k}}$ and therefore condition $(2)$ in the definition of a big node implies that
$I$ intersects at least $k$ intervals from the set ${\cal I}_{s_\alpha} \setminus \{I\}$,
which is a subset of ${\cal I}^{U}\setminus \{I\}$.
\end{sloppypar}

\medskip
\noindent \emph{Case 2. $u\not\in U$.}
In this case, $I\not\in{\cal I}^U$.
Let $I^-$ denote the rightmost interval in the set $\{I'\mid  I'<I, I'\in {\cal I}^U\}$ if such an interval exists, otherwise let $I^- = I_0$.
Similarly, let $I^+$ denote the leftmost interval in the set $\{I'\mid I<I', I'\in {\cal I}^U\}$ if such an interval exists, otherwise let $I^+ = I_{n+1}$.
Note that $I^-<I<I^+$.  We consider several subcases depending on $I^-$ and $I^+$.

\medskip
\noindent \emph{Case 2.1. $I^- = I_0$}. Let $s$ be the successor of $I_0$ on $P$.
Then, $(I_0,s)\in E_0$ and since $I_0 = I^- < I < I^+ = \min(s)$, condition $(2)$ in the definition of an arc in $E_0$
implies that $I$ intersects at least $k$ intervals from ${\cal I}_x\cup {\cal I}_s$.
Since $I$ does not intersect any interval associated with $I_0$, we infer that
$I$ intersects at least $k$ intervals from ${\cal I}_s$, which is a subset of ${\cal I}^{U}\setminus \{I\}$.

\medskip
\noindent \emph{Case 2.2. $I^+ = I_{n+1}$.} This case is symmetric to Case 2.1.

\medskip
\noindent \emph{Case 2.3. $I_0<I^-<I^+ < I_{n+1}$ and $I^-\cap I^+ = \emptyset$}.
In this case, there exists a unique $i\in \{1,\ldots, r-1\}$
such that $I^-$ is associated with a node in $W_i$
and $I^+$ is associated with a node in $W_{i+1}$.
Let $(s',s'')$ be the (unique) arc of $P$ such that $s'\in W_i$ and $s''\in W_{i+1}$.
Then $(s',s'')\in E_0$ and since $\max(s') = I^- < I < I^+ = \min(s'')$, condition $(2)$ in the definition of an arc in $E_0$
implies that $I$ intersects at least $k$ intervals from ${\cal I}_{s'}\cup {\cal I}_{s''}$,
which is a subset of ${\cal I}^{U}\setminus \{I\}$.

\medskip
\noindent \emph{Case 2.4. $I_0<I^-<I^+ < I_{n+1}$ and $I^-\cap I^+ \neq \emptyset$}.
In this case, there exists a unique $i\in \{1,\ldots, r\}$ such that each of $I^-$, $I^+$ is associated with a node in $W_i$.
Furthermore, since $I^-$ and $I^+$ are consecutive intervals in ${\cal I}^U$, there exists a node $s\in W_i$ such that
both $I^-$ and $I^+$ are associated with $s$. This is clearly true if $W_i$ consists of a single node.
If $W_i$ consists of more than one node, then it consists of big nodes only, and the fact that all edges of $P$ connecting
two nodes in $W_i$ are in $E_1$ implies that a node $s$ with the desired property can be obtained by
defining $s$ as the node in $W_i$ closest to $I_{n+1}$ (along $P$) such that interval $I^-$ is associated with $s$.

Clearly, $s\in S\cup B$. We consider two further subcases.

\medskip
\noindent \emph{Case 2.4.1. $s\in S$}.
In this case, we have $\min(s)\le I^-<I<I^+\le \max(s)$ and condition $(3)$ in the definition of a small node implies that
interval $I$ intersects at least $k$ intervals from the set ${\cal I}_s\setminus\{I\}$, which is a subset of ${\cal I}^U\setminus\{I\}$.

\medskip
\noindent \emph{Case 2.4.2. $s\in B$}.
In this case, $W_i$ is the node set of a subpath of $P$ consisting of big nodes only.
Let $I_{j_1},\ldots, I_{j_p}$ be the intervals corresponding to the vertices in $U_i$, ordered
increasingly. There exists a unique index $q\in \{1,\ldots, p-1\}$ such that
$I^- = I_{j_q}$ and $I^+ = I_{j_{q+1}}$.

Suppose first that $q<k$. Let $(s',s'')$ be the (unique) arc of $P$ such that $s'\not\in W_i$ and $s''\in W_i$.
Then $(s',s'')\in E_0$ and $s''\in B$, therefore condition $(4)$ in the definition of an arc in $E_0$
guarantees that the intervals $I_{j_1}, \ldots, I_{j_{k+1}}$ pairwise intersect.
This condition implies that interval $I$ intersects each of the intervals $I_{j_1}, \ldots, I_{j_{k+1}}$, and therefore
also at least $k+1$ intervals from the set ${\cal I}^U\setminus\{I\}$.

The case when $p-q<k$ is symmetric to the case $q<k$ and can be analyzed using condition $(3)$ in the definition of an arc in $E_0$.

\begin{sloppypar}
Suppose now that $k\le q\le p-k$. Let $\alpha = q-k+1$ and let $s_\alpha = (I_{j_{\alpha}}, I_{j_{\alpha+1}},\ldots, I_{j_{\alpha+2k-1}})$.
A similar argument as in Case 1 shows that $s_\alpha\in W_i$ and $s$ is a big node of $D^t_{k}$.
Moreover, $I_{j_{\alpha+k-1}} = I_{j_q} = I^-<I<I^+ = I_{j_{q+1}} = I_{j_{\alpha+k}}$. Therefore,
condition $(2)$ in the definition of a big node implies that
$I$ intersects at least $k$ intervals from the set ${\cal I}_{s_\alpha}$, which is a subset of ${\cal I}^{U}\setminus \{I\}$.
\end{sloppypar}

This shows that $U$ is a total $k$-dominating set of $G$ and completes the proof.\hfill\qed
%\end{proof}

%A proof of Theorem~\ref{thm:tkd-pig} will be given in Section~\ref{sec:time}.
%
\section{Improving the Running Time}\label{sec:time}

%\begin{thm1}
%For every positive integer $k$, the total $k$-domination problem is solvable in time $\mathcal{O}(|V(G)|^{3k})$ for
%proper interval graphs.
%\end{thm1}

We assume all notations from Section~\ref{sec:tkd-construction}. In particular, $G$ is a given $n$-vertex proper interval graph equipped with a proper interval model ${\cal I}$ and $(D^t_{k},\ell)$ is the derived edge-weighted directed acyclic graph with $\mathcal{O}(n^{2k})$ nodes.
We apply Proposition~\ref{prop:correctness} and show that a shortest $I_0, I_{n+1}$-path in $D^t_{k}$ can be computed in time $\mathcal{O}(n^{3k})$.
The main idea of the speedup relies on the fact that the algorithm avoids examining all arcs of the digraph. This is achieved by employing a dynamic programming approach based on a partition of a subset of the node set into $\mathcal{O}(n^{k})$ parts depending on the nodes' suffixes of length $k$.
The partition will enable us to efficiently compute minimum lengths of four types of directed paths in $D^t_k$, all starting in $I_0$ and ending in a specified vertex, vertex set, arc, or arc set.
In particular, a shortest $I_0, I_{n+1}$-path in $D^t_{k}$ will be also computed this way.

\begin{theorem}\label{thm:tkd-pig}
For every positive integer $k$, the total $k$-domination problem is solvable in time $\mathcal{O}(|V(G)|^{3k})$ in the class of proper interval graphs.
\end{theorem}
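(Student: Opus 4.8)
The plan is to apply Proposition~\ref{prop:correctness} and compute a shortest $I_0,I_{n+1}$-path in $D^t_k$ by dynamic programming along a topological order of the DAG, while never materialising the whole arc set $E_0$. First I would fix a topological order of $V(D^t_k)$: every arc $(s,s')$ satisfies $\max(s)<\max(s')$ and $\min(s)<\min(s')$ (for $E_0$ because $\max(s)<\min(s')\le\max(s')$, and for $E_1$ because the sequence shifts one interval to the right), so ordering the nodes by the index of $\max(\cdot)$, ties broken by $\min(\cdot)$, is valid. Writing $d(s)$ for the length of a shortest $I_0,s$-path, the value $d(s)$ is obtained by relaxing all incoming arcs. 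The arcs in $E_1$ are harmless: each node has only $\mathcal{O}(n)$ $E_1$-predecessors (each determined by the choice of one additional leftmost interval), so all $E_1$-relaxations together cost $\mathcal{O}(n^{2k}\cdot n)=\mathcal{O}(n^{2k+1})$. The entire difficulty lies in the $\mathcal{O}(n^{4k})$ arcs of $E_0$.

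The key structural observation is that, for a fixed target $s'$, whether $(s,s')\in E_0$ depends on the source $s$ only through its \emph{length-$k$ suffix}, i.e.\ the sequence of its $k$ rightmost intervals (together with the intrinsic condition~$(3)$, discussed below). Indeed, in a proper interval model every interval $I$ lying to the right of $\max(s)$ meets precisely a suffix of the sequence $s$, and the number of such intersections, as far as it matters for the threshold $k$ in conditions~$(1)$ and~$(2)$, is determined by the right endpoints of the $k$ rightmost intervals of $s$. Consequently I would partition the nodes of $D^t_k$ into the $\mathcal{O}(n^k)$ classes induced by their length-$k$ suffixes and maintain, for each class, the minimum value of $d(\cdot)$ over its already finalised nodes. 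Because condition~$(3)$ refers to the $(k+1)$-st rightmost interval of a big source, which lies outside the length-$k$ suffix but is \emph{intrinsic} to $s$, I would split each per-class minimum by node type, keeping one minimum over small nodes (and $I_0$) and one over big nodes satisfying condition~$(3)$; each node is tested for~$(3)$ once, when inserted into its class. Symmetrically, conditions~$(1)$, $(2)$ and~$(4)$ constrain the target $s'$ only through $\min(s')$ and its length-$k$ prefix, all intrinsic to $s'$.

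With this machinery the $E_0$-relaxation into a node $s'$ no longer inspects individual predecessors: instead I would scan the $\mathcal{O}(n^k)$ suffix-classes and, for each class compatible with $s'$, combine the class minimum with $\ell(s,s')=|\mathcal{I}_{s'}|$. This is exactly where the four families of shortest-path lengths announced before the theorem enter: the length of a shortest path ending at a prescribed node, at an arbitrary node of a prescribed suffix-class (the class minimum), at a prescribed $E_1$-arc, and at an arbitrary arc of a prescribed type. Maintaining these tables lets the recurrence traverse a big component along a chain of $E_1$-arcs and handle the entry into and exit out of a component by $E_0$-arcs. Since $D^t_k$ has $\mathcal{O}(n^{2k})$ nodes and the $E_0$-step spends $\mathcal{O}(n^k)$ per node, the total running time is $\mathcal{O}(n^{2k}\cdot n^{k})=\mathcal{O}(n^{3k})$, which dominates all remaining costs.

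The hard part will be to ensure that deciding compatibility of an entire suffix-class with $s'$ — in particular verifying condition~$(2)$, which quantifies over \emph{all} intervals of the gap $(\max(s),\min(s'))$ — costs $\mathcal{O}(1)$ amortised rather than $\mathcal{O}(n)$; a naive scan of the gap for every (class, target) pair would inflate the bound to $\mathcal{O}(n^{3k+1})$. Here I would exploit that, as a gap interval moves to the right, the number of intervals of $s$ it meets is non-increasing while the number of intervals of $s'$ it meets is non-decreasing, both being step functions with at most $k$ steps, the former determined by the length-$k$ suffix of $s$ and the latter by the length-$k$ prefix of $s'$. I would therefore precompute, for each level $j\in\{1,\dots,k\}$, the rightmost gap position still receiving $j$ intersections from $s$ and the leftmost gap position already receiving $j$ intersections from $s'$; condition~$(2)$ then reduces to $\mathcal{O}(k)$ comparisons among these $2k$ precomputed thresholds, certifying that no gap interval simultaneously falls below two levels whose sum is $k$. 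Pinning down this level-wise characterisation precisely, and arranging the precomputation so that each threshold is computed once and reused across all $\mathcal{O}(n^{2k})$ targets, is the real obstacle; once it is in place, correctness and the size bound follow directly from Proposition~\ref{prop:correctness}.
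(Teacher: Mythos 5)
Your proposal follows essentially the same route as the paper's proof: reduce to a shortest $I_0,I_{n+1}$-path via Proposition~\ref{prop:correctness}, partition the potential $E_0$-tails into $\mathcal{O}(n^k)$ classes according to their $k$-suffix (setting aside big nodes violating condition (3)), maintain per-class minima of shortest-path lengths alongside per-node and per-$E_1$-arc values, and relax the $\mathcal{O}(n)$ incoming $E_1$-arcs of each node directly, for a total of $\mathcal{O}(n^{2k}\cdot n^{k})=\mathcal{O}(n^{3k})$. Your closing discussion of how to certify condition (2) in constant amortised time per (class, target) pair via the $2k$ monotone intersection-count thresholds is in fact more explicit than the paper, which simply asserts that $\Sigma_s$ can be computed in $\mathcal{O}(n^{k})$ time per node.
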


\begin{proof}
In order to describe the algorithm in detail, we need to introduce some notation. Given a node $s\in S\cup B$, say $s = (I_{i_1},\ldots, I_{i_q})$ (recall that $k+1\le q\le 2k$), we define its \emph{$k$-suffix} of $s$ as the sequence $(I_{i_{q-k+1}}, \ldots, I_{i_q})$ and denote it by $\suf_k(s)$.

The algorithm proceeds as follows. First, it computes the node set of $D^t_{k}$ and
a subset $B'$ of the set of big nodes consisting of precisely those nodes $s\in B$
satisfying condition $(3)$ in the definition of $E_0$ (that is,
the rightmost $k+1$ intervals associated with $s$ pairwise intersect).
Next, it computes a partition $\{A_\sigma\mid \sigma\in \Sigma\}$ of $S\cup B'$ defined by
$\Sigma = \{\suf_k(s): s\in S\cup B'\}$ and $A_\sigma =  \{s\in S\cup B'\mid \suf_k(s) = \sigma\}$ for all $\sigma\in \Sigma$.

The algorithm also computes the arc set $E_1$. On the other hand, the arc set $E_0$ is not generated explicitly, except for the arcs in $E_0$ with tail $I_0$ or head $I_{n+1}$. Using dynamic programming, the algorithm will compute the following values.
\begin{enumerate}[(i)]
  \item For all $s\in V(D^t_{k})\setminus\{I_0\}$, let $p^0_s$ denote the minimum $\ell$-length of a directed \hbox{$I_0$,$s$-path} in $D^t_k$ ending with an arc from $E_0$.
  \item For all $s\in V(D^t_{k})\setminus\{I_0\}$, let $p_s$ denote the minimum $\ell$-length of a directed \hbox{$I_0$,$s$-path} in $D^t_k$.
  \item For all $e\in E_1$, let $p_e$ denote the minimum $\ell$-length of a directed path in $D^t_k$ starting in $I_0$ and ending with $e$.
  \item For all $\sigma\in \Sigma$, let
$p_\sigma$ denote the minimum $\ell$-length of a directed path in $D^t_k$ starting in $I_0$ and ending in $A_\sigma$.
\end{enumerate}
In all cases, if no path of the corresponding type exists, we set the value of the respective $p^0_s$,  $p_s$, $p_e$, or $p_\sigma$ to $\infty$.

Clearly, once all the $p^0_s$, $p_s$, $p_e$, and $p_\sigma$ values will be  computed, the length of a shortest $I_0, I_{n+1}$-path in $D^t_{k}$ will e given by $p_{I_{n+1}}$.

The above values can be computed using the following recursive formulas:
\begin{enumerate}[(i)]
\item $p^0_s$ values:
\begin{itemize}
  \item For $s \in S\cup B$, let $\Sigma_s = \{\sigma\in \Sigma\mid (\tilde{s},s)\in E_0\text{ for some }\tilde{s}\in A_\sigma\}$ %
and set $$p^0_s = \left\{
            \begin{array}{ll}
              |{\cal I}_s|, & \hbox{if $(I_0,s)\in E_0$;} \\
              \underset{\sigma\in \Sigma_s}{\min}p_\sigma + |{\cal I}_s|, & \hbox{if $(I_0,s)\not\in E_0$ and $\Sigma_s\neq \emptyset$;}\\
              \infty, & \hbox{otherwise.}
            \end{array}
          \right.$$
  \item For $s = I_{n+1}$, let $p_s^0 = \underset{(\tilde{s},s)\in E_0}{\min}p_{\tilde{s}}$.
\end{itemize}
  \item $p_s$ values: For all $s\in V(D^t_k)\setminus\{I_0\}$, we have $p_s  = \min\left\{p^0_s, \underset{(\tilde{s},s)\in E_1}{\min}p_{(\tilde{s},s)}\right\}$.

\smallskip
  \item $p_e$ values: For all $e = (s,s')\in E_1$, we have $p_e = p_s+1$.

\smallskip
   \item $p_\sigma$ values: For all $\sigma\in \Sigma$, we have
$p_\sigma  = \underset{s\in A_\sigma}{\min}p_s$.
\end{enumerate}
The above formulas can be computed following any topological sort of $D^t_k$ such that
if $s,s'\in S\cup B$ are such that $\suf_k(s)\neq \suf_k(s')$ and $\suf_k(s)$ is lexicographically smaller than $\suf_k(s')$, then $s$ appears strictly before $s'$ in the ordering.
When the algorithm processes a node $s\in V(D^t_k)\setminus\{I_0\}$, it computes the values of $p_s^0$, $p_e$ for all $e = (\tilde{s},s)\in E_1$, and $p_s$, in this order.
For every $\sigma\in \Sigma$, the value of $p_\sigma$ is computed as soon as
the values of $p_s$ are known for all $s\in A_\sigma$.

\medskip
{\it Correctness of the algorithm.} We will justify the recursive formula for the $p^0_s$ values when $s\in S\cup B$; all other recursive formulas follow directly from the definitions of the values involved
and length function $\ell$ (cf.~equation~\eqref{eq:length} on page~\pageref{eq:length}).
Let $s\in S\cup B$ and consider a directed $I_0$,$s$-path $P$ in $D^t_k$ ending with an arc $(\tilde s,s)$ from $E_0$. Note that by the definition of length function $\ell$, we have $\ell(\tilde s,s) = |{\cal I}_s|$, independently of $\tilde s$. Thus, $p^0_s\ge |{\cal I}_s|$, with equality if and only if $(I_0,s)\in E_0$. Suppose now that $(I_0,s)\not\in E_0$. Then $\tilde s\in S\cup B'$ and setting $\sigma = \suf_k(\tilde{s})$, we have $\sigma\in \Sigma$ and $\tilde s\in A_\sigma$, which implies $\sigma\in \Sigma_s$.

We show next that for every $\bar{s}\in A_{\sigma}$ we have $(\bar{s}, s)\in E_0$. Let $\bar{s}\in A_{\sigma}$. Then,  $\bar{s}\in S\cup B'$ and
$\suf_k(\bar{s}) = \sigma = \suf_k(\tilde{s})$, which implies that
$\max(\bar{s}) = \max(\tilde{s})$. Let us now verify, using the fact that $(\tilde s,s)\in E_0$ and the corresponding conditions (1)--(4) in the definition of $E_0$, that conditions (1)--(4) in the definition of $E_0$ also hold when applied to the pair $(
\bar{s}, s)$. Condition (1) follows from the fact that $\max(\bar{s}) = \max(\tilde{s})$.
For condition (2), let $I\in {\cal I}$ be an interval such that $\max(\bar{s})<I<\min(s)$.
Then $\max(\tilde{s})<I<\min(s)$ and thus $I$ intersects at least $k$ intervals from
${\cal I}_{\tilde{s}}\cup {\cal I}_{s}$. This implies that $I$ intersects at least $k$ intervals from
the set ${\cal I}_{\suf_k(\tilde{s})}\cup {\cal I}_{s}$, where ${\cal I}_{\suf_k(\tilde{s})}$ denotes the set of rightmost $k$ intervals associated with $\tilde{s}$. Since $\suf_k(\bar{s}) = \suf_k(\tilde{s})$, the set ${\cal I}_{\suf_k(\tilde{s})}$ coincides with the set of
rightmost $k$ intervals associated with $\bar{s}$, and therefore $I$ intersects at least $k$ intervals from
the set ${\cal I}_{\bar{s}}\cup {\cal I}_{s}$, yielding condition (2).
Condition (3) holds since if $\bar{s}\in B$ then $\bar{s}\in B'$. Finally, condition (4) holds since it holds for the arc
$(\tilde s,s)$ and depends only on $s$. Thus, conditions (1)--(4) in the definition of $E_0$ hold when applied to the pair $(\bar{s}, s)$, which implies $(\bar{s}, s)\in E_0$, as claimed. Altogether, this justifies that $p^0_s = \underset{\tilde{\sigma}\in \Sigma_s}{\min}p_{\tilde{\sigma}} + |{\cal I}_s|$ if
$(I_0,s)\not\in E_0$ and $\Sigma_s\neq\emptyset$.

\smallskip
{\it Analysis of the running time.}
The node set of $D^t_k$, including sets $S$, $B$, and $B'$, as well as the set of arcs in $E_0$ with tail $I_0$ or head $I_{n+1}$ can be computed in time $\mathcal{O}(n^{2k+1})$. In the same time the partition $S\cup B' = \{A_\sigma\mid \sigma\in \Sigma\}$ can be computed, along with a topological sort of $D^t_k$ as specified above (for example by iterating over the nodes in $S\cup B$ and building a radix tree with respect to their $k$-suffixes).

When processing a node $s\in S\cup B$, the set $\Sigma_s$ can be computed in time
$\mathcal{O}(n^{k})$ by verifying for each $\sigma\in \Sigma$, whether an arbitrarily chosen node $\tilde{s}\in A_\sigma$ satisfies $(\tilde{s},s)\in E_0$. (As noted above, this property
is independent of the choice of $\tilde{s}$.)
In the same time $\mathcal{O}(n^{k})$, the value of $p_s^0$ can be computed.
The values of $p_e$ for all $e = (\tilde{s},s)\in E_1$ as well as $p_s$ can be computed in time
proportional to $d_{E_1}^-(s)+1$, where $d_{E_1}^-(s)$ is the in-degree of $s$
in the spanning subdigraph of $D^t_{k}$ with arc set $E_1$.
Processing of node $I_{n+1}$ can be done in time proportional to its in-degree in
$D^t_k$, which is in $\mathcal{O}(n^{2k})$.
We conclude that all the nodes of $D^t_k$ can be processed in time
\hbox{$\mathcal{O}\left(\sum_{s\in S\cup B}\left(n^{k}+d_{E_1}^-(s)+1\right)\right)+\mathcal{O}\left(n^{2k}\right) =
\mathcal{O}(n^{3k})$}, since, in particular, $d_{E_1}^-(s)\le n$ for all $s\in S\cup B$.
The values of $p_\sigma$ for all $\sigma\in \Sigma$ can be computed in overall time
$\mathcal{O}(\sum_{\sigma\in \Sigma}|A_\sigma|) = \mathcal{O}(n^{2k})$.
Thus, the overall time complexity of the algorithm is
$\mathcal{O}(n^{3k})$, as claimed.
\end{proof}

\section{Modifying the Approach for $k$-Domination}\label{sec:kd}

With minor modifications of the definitions of small nodes, big nodes, and arcs in $E_0$ of the derived digraph, the approach developed in Sections~\ref{sec:tkd-construction}--\ref{sec:time} for total $k$-domination leads to an analogous result for $k$-domination.

Given a proper interval graph $G$ equipped with a proper interval model ${\cal I}$ (as in Section~\ref{sec:tkd-construction}), we construct an edge-weighted directed acyclic graph denoted by $D_k$. The digraph $D_k$ is defined the same way as $D^t_{k}$ (see Section~\ref{sec:tkd-construction}), except for the following:
\begin{itemize}
\item The set $S$ of small nodes now consists exactly of those increasing sequences $s = (I_{i_1},\ldots, I_{i_q})$ of intervals from ${\cal I}$ such that:
\begin{enumerate}[(1)]
  {\setlength\itemindent{0.2cm}\item $1\le q \le 2k-1$,
  }{\setlength\itemindent{0.2cm}\item $I_{i_j}\cap I_{i_{j+1}}\neq\emptyset$ for all $j\in \{1,\ldots, q-1\}$, and
  }{\setlength\itemindent{0.2cm}\item every interval $I\in {\cal I}\setminus {\cal I}_s$  such that $\min(s)< I < \max(s)$ intersects at least $k$ intervals from the set ${\cal I}_s$.}
\end{enumerate}

\smallskip
\item The set $B$ of big nodes consists exactly of those
increasing sequences $s = (I_{i_1},\ldots, I_{i_{2k}})$ of intervals from ${\cal I}$ of length $2k$ such that:
\begin{enumerate}[(1)]
  {\setlength\itemindent{0.2cm}\item $I_{i_j}\cap I_{i_{j+1}}\neq\emptyset$ for all $j\in \{1,\ldots, 2k-1\}$ and
  }{\setlength\itemindent{0.2cm}\item every interval $I\in {\cal I}\setminus {\cal I}_s$  such that $I_{i_k}< I < I_{i_{k+1}}$ intersects at least $k$ intervals from the set ${\cal I}_s$.}
\end{enumerate}

\smallskip
\item Arc set $E_0$ consists exactly of those ordered pairs $(s,s')\in V(D_k)\times V(D_k)$ such that:
\begin{enumerate}[(1)]
  {\setlength\itemindent{0.2cm}\item $\max(s)< \min(s')$ and $\max(s)\cap \min(s') = \emptyset$,
  }{\setlength\itemindent{0.2cm}\item every interval $I\in {\cal I}$ such that $\max(s)<I<\min(s')$ intersects at least $k$ intervals from
  ${\cal I}_s\cup {\cal I}_{s'}$,
  }{\setlength\itemindent{0.2cm}\item if $s=(I_{j_1},\ldots,I_{j_{2k}})$ is a big node, then every interval $I\in {\cal I}\setminus {\cal I}_s$  such that $I_{j_{k+1}}< I < I_{j_{2k}}$ intersects at least $k$ intervals in ${\cal I}_s$, and
  }{\setlength\itemindent{0.2cm}\item if $s'=(I_{j_1},\ldots,I_{j_{2k}})$ is a big node, then every interval $I\in {\cal I}\setminus {\cal I}_{s'}$  such that $I_{j_{1}}< I < I_{j_{k}}$ intersects at least $k$ intervals in ${\cal I}_{s'}$.}
\end{enumerate}
\end{itemize}

The arc set $E_1$ is defined analogously as in the case of $D^t_k$ (using, of course, ordered pairs from $V(D_k)$) and the length function $\ell(s,s')$ on the arcs $(s,s')$ of $D_k$ is defined using~\eqref{eq:length} (see Section~\ref{sec:tkd-construction}).

A similar approach as that used to prove Proposition~\ref{prop:correctness} yields the following.

\begin{proposition}\label{prop:correctness-k-domination}
Given a proper interval graph $G$ and a positive integer $k$, let $D_k$ be the directed graph constructed as above.
Then $G$ has a $k$-dominating set of size $c$ if and only if $D_k$ has a directed path from $I_0$ to $I_{n+1}$ of length $c$.
\end{proposition}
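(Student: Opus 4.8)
The plan is to follow the proof of Proposition~\ref{prop:correctness} very closely, adapting each step to the $k$-domination setting. The essential structural difference is that a $k$-dominating set $U$ only needs to $k$-dominate the vertices \emph{outside} $U$; vertices inside $U$ impose no constraint. This is precisely why the node and arc definitions for $D_k$ differ from those for $D^t_k$ exactly in where intervals of ${\cal I}_s$ are excluded and where strict inequalities replace weak ones: in conditions (3) for small nodes, (2) for big nodes, and the new conditions (3)--(4) for $E_0$, we only require $I\in {\cal I}\setminus {\cal I}_s$ (i.e.\ $I$ corresponds to a vertex outside the dominating set) and we use strict inequalities $\min(s)<I<\max(s)$ (resp.\ $I_{i_k}<I<I_{i_{k+1}}$) so as not to constrain the intervals of $s$ themselves. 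Note also that small nodes now may have as few as $1$ interval, reflecting the fact that Lemma~\ref{lem:components} has no analogue here---a single vertex with no neighbor in $U$ is a legitimate component of $G[U]$.

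For the forward direction I would again take a $k$-dominating set $U$ of size $c$, order the components $C_1,\ldots,C_r$ of $G[U]$ from left to right, and associate to each small component ($|V(C_i)|<2k$) a single small node $s^i$ and to each big component the path $P^i = s^i_1,\ldots,s^i_{p-2k+1}$ of length-$2k$ windows, exactly as before. The verification that each $s^i$ (resp.\ each $s^i_q$) is a valid node uses the new conditions: when checking property (3) of a small node or property (2) of a big node, one considers only an interval $I\in {\cal I}\setminus {\cal I}_{s^i}$ with $\min(s^i)<I<\max(s^i)$; its vertex $u$ lies outside $U$, so the $k$-domination hypothesis gives $u$ at least $k$ neighbors in $U$, all necessarily in $C_i$, and the betweenness confinement argument carries over verbatim. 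Assembling $P = I_0, P^1,\ldots,P^r, I_{n+1}$ and checking that consecutive nodes form arcs proceeds as in the total case; the only genuinely new work is verifying the new conditions (3)--(4) of $E_0$, but these are immediate because when $(s^i_{\textrm{last}}, s')\in E_0$ with $s^i_{\textrm{last}}$ a big node, the domination of the vertices corresponding to intervals strictly between $I_{j_{k+1}}$ and $I_{j_{2k}}$ is guaranteed by their being $k$-dominated within $C_i$.

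For the backward direction I would again partition the nodes of a given $I_0,I_{n+1}$-path $P$ into the blocks $W_0,\ldots,W_{r+1}$ along maximal $E_1$-subpaths, set $U_i$ to be the corresponding vertex sets, and prove as before that the $U_i$ are pairwise non-adjacent and that each $G[U_i]$ ($1\le i\le r$) is a component of $G[U]$; the length bookkeeping showing $|U|=c$ is identical since the length function is unchanged. It then remains to show every vertex $u\notin U$ has at least $k$ neighbors in $U$. Here the case analysis of Case~2 of the original proof applies almost unchanged, but now I only need to treat $u\notin U$ (Case~1 disappears entirely), so I would introduce the flanking intervals $I^-,I^+$ of $I$ in ${\cal I}^U$ and split into the subcases $I^-=I_0$, $I^+=I_{n+1}$, $I^-\cap I^+=\emptyset$, and $I^-\cap I^+\neq\emptyset$, invoking the $E_0$ conditions or the node conditions (now in their strict, ${\cal I}\setminus{\cal I}_s$ form) exactly as before.

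The main obstacle, and the one point deserving genuine care rather than a blanket ``by symmetry,'' is Case~2.4.2 where $I^-,I^+$ are both associated with a big block $W_i$ and $I$ falls strictly between two consecutive dominating intervals $I_{j_q}, I_{j_{q+1}}$ near the \emph{boundary} of the block (the subcases $q<k$ or $p-q<k$). In the total $k$-domination proof these boundary cases were handled by the old $E_0$ conditions (3)--(4) asserting that the extreme $k+1$ intervals pairwise intersect. For $k$-domination those conditions have been replaced by the weaker, betweenness-restricted conditions (3)--(4) of the new $E_0$, so I must check that the entering arc $(s',s'')\in E_0$ with $s''\in B$ still supplies enough intersections to $k$-dominate $u$. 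The key observation is that the new condition (4) on the pair $(s',s'')$ applies exactly to intervals $I$ with $I_{j_1}<I<I_{j_k}$, which is precisely the boundary region $q<k$, and dually condition (3) on the outgoing arc covers $p-q<k$; so the argument goes through, but one must confirm that the relevant interval $I$ genuinely satisfies the strict inequalities triggering the appropriate $E_0$ condition, and that $I\notin {\cal I}_{s''}$. Once this boundary bookkeeping is pinned down, the remaining interior subcase $k\le q\le p-k$ reduces to property (2) of a suitable big node $s_\alpha\in W_i$ just as in Case~1 of the original proof.
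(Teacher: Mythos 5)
Your proposal is correct and follows essentially the same route as the paper's own proof: mirror the total-$k$-domination argument component by component, using the relaxed node conditions (intervals restricted to ${\cal I}\setminus{\cal I}_s$ with strict inequalities) in the forward direction and the new betweenness-restricted $E_0$ conditions (3)--(4) to handle the boundary subcases $q<k$ and $p-q<k$ in the backward direction. The delicate points you flag (verifying $I\notin{\cal I}_{s''}$ and that the relevant interval lies strictly in the range triggering the $E_0$ condition, plus the observation that such an interval cannot belong to ${\cal I}^U$ at all) are exactly the ones the paper spells out.
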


\begin{proof}%[of Proposition~\ref{prop:correctness-k-domination}]
We assume notation from Sections~\ref{sec:tkd-construction} and~\ref{sec:kd}, and, additionally, fix the following notation useful for both directions of the proof: for $X\subseteq V(G)$, we denote by ${\cal I}^X$ the set of intervals in ${\cal I}$ corresponding to vertices in $X$.

\medskip
First we establish the forward implication. Suppose that $G$ has a $k$-dominating set $U$ of size $c$. The components of $G[U]$ can be naturally ordered from left to right, say as $C_1,\ldots, C_r$. To each component $C_i$ we will associate a path $P^i$ in $D_k$ defined as a sequence of nodes. The desired directed path $P$ from $I_0$ to $I_{n+1}$ in $D_k$ will be then obtained by combining the paths $P^i$ into a single sequence of nodes preceded by $I_0$ and followed by $I_{n+1}$.

We say that a component $C_i$ is \emph{small} if $|V(C_i)|<2k$ and
\emph{big}, otherwise. To every small component $C_i$ we associate a sequence $s^i=(I_{i_{1}}, \ldots, I_{i_{p}})$, consisting of the $p=|V(C_i)|$ intervals corresponding to the vertices of $C_i$, ordered increasingly. We claim that $s^i$ is a small node of $D_k$. Property (1) of the definition of a small node is satisfied by definition. Property (2) follows from the fact that ${\cal I}$ is a proper interval model of $G$ and $C_i$ is connected. To show that $s^i$ satisfies property (3) of the definition of a small node, consider an interval $I \in {\cal I}\setminus {\cal I}_s$ such that $\min(s^i) < I < \max(s^i)$ and let $u$ be the vertex of $G$ corresponding to $I$. Note that since $I\not\in {\cal I}_s$, we have $I\not\in {\cal I}^U$ and hence $u\not\in U$. In particular, since $U$ is a $k$-dominating set in $G$, vertex $u$ has at least $k$ neighbors in $U$. Since $\min(s^i) < I < \max(s^i)$, all the neighbors of $u$ in $U$ must belong to $C_i$, more specifically, $N(u) \cap U \subseteq V(C_i)$. It follows that $I$ intersects at least $k$ intervals from the set ${\cal I}_{s^i}$, establishing also property (3) and with it the claim that $s^i$ is a small node of $D_k$.
The path $P^i$ associated to component $C_i$ is the one-node path having $s^i$ as a node.

Let now $C_i$ be a big component and let $p = |V(C_i)|$. Then $p\ge 2k$.
Let $I_{j_1},\ldots,  I_{j_p}$ be the intervals corresponding to the vertices of $C_i$, ordered increasingly.
For every $q \in \{1,\ldots, p-2k+1\}$, let $s^i_q$ denote the subsequence of these intervals of length $2k$ starting at $q$-th interval,
that is, $s^i_q = (I_{j_q}, I_{j_{q+1}},\ldots, I_{j_{q+2k-1}})$. We claim that for each $q\in \{1,\ldots, p-2k+1\}$,  $s^{i}_{q}$ is a big node of $D_k$.
Property (1) follows from the fact that ${\cal I}$ is a proper interval model of $G$ and $C_i$ is connected.  To show that $s^i_q$ satisfies property (2) of the definition of a big node, consider an interval $I \in {\cal I}\setminus {\cal I}_s$ such that $I_{j_{q+k-1}} < I < I_{j_{q+k}}$ and let $u$ be the vertex of $G-U$ corresponding to $I$. Since $U$ is a $k$-dominating set, vertex $u$ has at least $k$ neighbors in $U$. Since $I_{j_{1}} < I < I_{j_{p}}$, all the neighbors of $u$ in $U$ must belong to $C_i$. It follows that $I$ intersects at least $k$ intervals from the set $\{I_{j_{1}}, \ldots, I_{j_{p}}\}\setminus\{I\}$. Suppose for a contradiction that $I$ intersects strictly less that $k$ intervals from the set ${\cal I}_{s^i_q}\setminus\{I\}=\{I_{j_{q}}, I_{j_{q+1}}, \ldots, I_{j_{q+2k-1}}\}\setminus\{I\}$. Since $I$ intersects at least $k$ intervals from the set $\{I_{j_{1}}, \ldots, I_{j_{p}}\}\setminus\{I\}$, there is an interval, call it $I'$, in the set $\{I_{j_{1}}, \ldots, I_{j_{p}}\} \setminus {\cal I}_{s^i_q}$ such that $I \cap I' \neq \emptyset$. Note that
$\{I_{j_{1}}, \ldots, I_{j_{p}}\} \setminus {\cal I}_{s^i_q} = {\cal I}_1\cup {\cal I}_2$ where
${\cal I}_1 = \{I_{j_1}, \ldots, I_{j_{q-1}}\}$ and ${\cal I}_2 = \{I_{j_{q+2k}}, \ldots, I_{j_{p}}\}$. Suppose first that $I'\in {\cal I}_1$, that is, $I' = I_{j_{\alpha}}$ for some $\alpha\in \{1,\ldots, q-1\}$.
Since ${\cal I}$ is a proper interval model of $G$, conditions $I_{j_{q+k-1}}< I< I_{j_{q+k}}$ and $I_{j_{q+k-1}}\cap I_{j_{q+k}}\neq \emptyset$ imply that interval $I$ has non-empty intersection with interval $I_{j_{q+k-1}}$. Similarly, the fact that $I$ intersects both $I'=I_{j_{\alpha}}$ and $I_{j_{q+k-1}}$ implies that $I$ also intersects each of the intervals in the set $\{I_{j_{\alpha}}, I_{j_{\alpha+1}},\ldots, I_{j_{q+k-1}}\}$; in particular, $I$ intersects each of the $k$ intervals in the set $\{I_{j_{q}}, I_{j_{q+1}},\ldots, I_{j_{q+k-1}}\}$, which is a subset of ${\cal I}_{s^i_q}$. This  contradicts the assumption on $I$. The case $I'\in {\cal I}_2$ is symmetric to the case $I'\in {\cal I}_1$.
This establishes property (2) and with it the claim that $s^i_q$ is a big node of $D_k$.
The path associated to component $C_i$ is defined as $P^i=s^{i}_{1}, \ldots, s^{i}_{p-2k+1}$.

Let $P$ denote the sequence of nodes of $D_k$ obtained by combining the paths $P^i$ into a single sequence of nodes preceded by $I_0$ and followed by $I_{n+1}$, in the natural order $I_0, P^1, \ldots, P^r, I_{n+1}$. Since $U\neq \emptyset$, the paths $P^i$ are pairwise node-disjoint, and none of them contains $I_0$ or $I_{n+1}$,
path $P$ has at least $3$ nodes. Moreover, note that for each node $s$ of $P$ other than $I_0$ and $I_{n+1}$, the vertices of $G$ corresponding to intervals associated with $s$ all belong to the same component of $G[U]$, call it $C_s$.

We claim that $P$ is a path in $D_k$, that is, that every two consecutive nodes of $P$ form an arc in $D_k$.
Consider a pair $s,s'$ of consecutive nodes of $P$. Clearly, $s\neq I_{n+1}$. We consider three subcases depending on whether $s=I_0$, $s \in S$, or $s \in B$.

Suppose first that $s=I_0$. We claim that $(s,s')\in E_0$. Property (1) of the definition of the edges in $E_0$ clearly holds, as does (vacuously) property (3) (since $I_0 \notin B$). To show property (2), consider an interval $I\in {\cal I}$ such that $\max(s)<I<\min(s')$. Since $U$ is a $k$-dominating set and $I$ corresponds to a vertex of $G$ not in $U$, interval $I$ intersects at least $k$ intervals from ${\cal I}^U$. Since $I <\min(s')$ and $\min(s')$ is the leftmost interval corresponding to a vertex of $U$, it follows that $I$ intersects the $k$ leftmost intervals from ${\cal I}^U$. All these intervals belong to vertices from component $C_1$, and therefore to ${\cal I}_{s'}$. This establishes property (2).
It remains to verify property $(4)$. Let $s' \in B$, say $s'=(I_{j_1},\ldots,I_{j_{2k}})$, and consider an interval
 $I\in {\cal I}\setminus {\cal I}_{s'}$  such that $I_{j_{1}}< I < I_{j_{k}}$. We claim that $I\in {\cal I}\setminus {\cal I}^U$.
Suppose that this is not the case. The construction of $P$ implies that ${\cal I}_{s'}$ consists of the $2k$ leftmost intervals corresponding to vertices in $U$. In particular, if $I\in {\cal I}^U$, then condition $I_{j_{1}}< I < I_{j_{k}}$ implies that
$I = I_{j\alpha}$ for some $1<\alpha<k$, hence $I\in {\cal I}_{s'}$, a contradiction. Since $I\in {\cal I}\setminus {\cal I}^U$ and $U$ is a $k$-dominating set in $G$, interval $I$ intersects at least $k$ intervals from ${\cal I}^U$. Again, since $I_{j_{1}},\ldots, I_{j_{2k}}$ are the $2k$ leftmost
intervals corresponding to vertices in ${\cal I}^U$ and $I < I_{j_{k}}$, the fact that $I$ intersects at least $k$ intervals from ${\cal I}^U$ implies that
$I$ also intersects at least $k$ intervals from the set $\{I_{j_1},\ldots, I_{j_{2k-1}}\}$, which is a subset of ${\cal I}_{s'}$. This establishes property (4). It follows that $(s,s') \in E_0$, as claimed.

Suppose now that $s \in S$.
We claim that $(s,s')\in E_0$. Property (1) of the definition of the edges in $E_0$ follows from the construction of $P$ and the fact that $C_s \neq C_{s'}$. Property (2) follows from the construction of $P$ together with the fact that $U$ is a $k$-dominating set in $G$. Property (3) is satisfied vacuously. A similar argument as in the case $s=I_0$ establishes property (4) (this time using the fact that if $s'\in B$, then ${\cal I}_{s'}$ consists of the $2k$ leftmost intervals in ${\cal I}^{C_i}$ where $C_i$ is the component of $G[U]$ containing the vertices corresponding to
intervals associated with $s'$).
It follows that $(s,s') \in E_0$, as claimed.

Suppose now that $s \in B$. If $s'\in S$, then we conclude that $(s,s') \in E_0$ by symmetry with the case $s\in S$, $s'\in B$.
If $s'=I_{n+1}$, then we conclude that $(s,s') \in E_0$ by symmetry with the case $s=I_0$.
Let now $s' \in B$. If $C_s\neq C_{s'}$, then we can use similar arguments as in the case $s\in S$ to show that $(s,s')\in E_0$.
If $C_s=C_{s'}$, let $i\in \{1,\ldots, r\}$ be the index such that $C_s=C_i$. The construction of $P$ implies that $s$ and $s'$ are nodes of $P^i$ such that $s=s^i_q$ and $s'=s^i_{q+1}$ for some $q \in \{1, \ldots, p-2k\}$ where $p=|V(C_i)|$. The definitions of $s^i_q$ and $s^i_{q+1}$ now imply that $(s,s')\in E_1$, showing in particular that $(s,s')$ is an arc of $D_k$.

This shows that $P$ is a directed path from $I_0$ to $I_{n+1}$ in $D_k$, as claimed. Furthermore, the construction of $P$ implies that the length of $P$ equals the size of $U$, which is $c$.

\medskip
Now we establish the converse implication. Suppose that $D_k$ has a directed path $P$ from $I_0$ to $I_{n+1}$ of length $c$.
Let $U$ be the set of all vertices $u\in V(G)$ such that the interval corresponding to $u$ is associated with some node of $P$.
We claim that $U$ is a $k$-dominating set in $G$ of size $c$.

Note that since neither of $I_0$ and $I_{n+1}$ is a big node, $(I_0,I_{n+1})\not\in E_1$; moreover, $(I_0,I_{n+1})\not\in E_0$ since condition $(2)$ in the definition
of an arc in $E_0$ fails. Therefore, $(I_0,I_{n+1})\not\in E(D_k)$ and $P$ has at least $3$ nodes.
The set of nodes of $P$ can be uniquely partitioned into consecutive sets of nodes, say $W_0,W_1,\ldots, W_r, W_{r+1}$, such that
each $W_i$ is the node set of a maximal subpath $P'$ of $P$ such that $E(P')\subseteq E_1$.
(Equivalently, the $W_i$'s are the vertex sets of the components of the undirected graph underlying the digraph $P-E_0$.)
Note that $W_0 = \{I_0\}$ and $W_{r+1} = \{I_{n+1}\}$.
For all $i\in \{0,1,\ldots, r+1\}$,
let $U_i$ be the set of vertices of $G$ corresponding to
intervals in ${\cal I}'$ associated with nodes in $W_i$.

We claim that for every $0\le i<j\le r+1$ and for every pair of intervals
$I\in {\cal I}^{U_i}$ and $J\in {\cal I}^{U_j}$ we have $I<J$ and $I\cap J = \emptyset$.
This can be proved by induction on $d = j-i$. If $d = 1$, then let
$(s,s')$ be the unique arc of $P$ connecting a vertex in $W_i$ with a vertex in $W_{i+1}$.
By the definition of the $W_i$'s, we have $(s,s')\not\in E_1$ and therefore $(s,s')\in E_0$, which implies that every interval associated with $s$ is
smaller (with respect to ordering $<$) and disjoint from every interval associated with $s'$.
Consequently, the definitions of $W_i$, $W_{i+1}$, $U_i$, $U_{i+1}$, and the properties of the arcs in $E_1$ imply that
every interval in ${\cal I}^{U_i}$ is smaller (with respect to ordering $<$) and disjoint from every interval
in ${\cal I}^{U_{i+1}}$. The inductive step follows from the transitivity of the relation
on ${\cal I}'$ in which interval $I$ is in relation with interval $J$ if and only if
$I<J$ and $I\cap J = \emptyset$.

The above claim implies that no edge of $G$ connects a vertex in $U_i$ with a vertex in $U_j$ whenever $1\le i<j\le r$.
More specifically, we claim that for every $i\in \{1,\ldots, r\}$,
the subgraph of $G$ induced by $U_i$ is a component of $G[U]$.
This can be proved using the properties of the arcs in $D_k$, as follows.
Let $i\in \{1,\ldots, r\}$. Suppose first that $W_i\cap S \neq\emptyset$. Since every arc in $E_1$ connects a pair of big nodes,
we infer that $W_i = \{s\}$ for some $s\in S$. Using property (2) in the definition of a small node, we infer that $G[U_i]$ is connected.
Therefore, since no edge of $G$ connects a vertex in $U_i$ with a vertex in $U_j$ for $j\neq i$,
we infer that $G[U_i]$ is a component of $G[U]$, as claimed.
Suppose now that $W_i\cap S = \emptyset$, that is, $W_i\subseteq B$.
Let $P'$ be the subpath of $P$ such that $V(P') = W_i$. Since $P'$ consists only of big nodes and
only of arcs in $E_1$, we can use property (1) in the definition of a big node to infer that $G[U_i]$ is connected.
It follows that $G[U_i]$ is a component of $G[U]$ also in this case.

Let us now show that the size of $U$ equals the length of $P$. This will imply that $|U|\le c$.
For every $i\in \{1,\ldots, r+1\}$, let $P^i$ be the subpath of $P$ of consisting of all the arcs of $P$
entering a node in $W_i$. By construction, the paths $P^1,\ldots, P^{r+1}$ are pairwise arc-disjoint
and their union is $P$. For every $i\in \{1,\ldots, r+1\}$, let $\ell_i$ denote the the length of $P^i$.
The definitions of the $W_i$'s and of the length function imply that $\ell_i = \ell(s_i,s_i')+|W_i|-1$, where $(s_i,s_i')$ is the (unique) arc of $P$ such that
$s_i\not\in W_i$ and $s_i'\in W_i$.
Since $(s_i,s_i')\in E_0$, we have
$$\ell(s_i,s_i') = \left\{
                    \begin{array}{ll}
                     |{\cal I}_{s_i'}|, & \hbox{if $i\in \{1,\ldots, r\}$;} \\
                      0, & \hbox{if $i = r+1$.}
                    \end{array}
                  \right.$$
It follows that $$\ell_i = \left\{
                                       \begin{array}{ll}
                                         |{\cal I}_{s_i'}|+|W_i|-1, & \hbox{if $i\in \{1,\ldots, r\}$;} \\
                                         0, & \hbox{if $i = r+1$.}
                                       \end{array}
                                     \right.$$
Furthermore, we have $|{\cal I}_{s_i'}|+|W_i|-1 = |U_i|$ for all $i\in \{1,\ldots, r\}$, which implies
$\ell_i = |U_i|$. Since the subgraph of $G$ induced by $U_i$ is a component of $G[U]$,
we have $|U| = \sum_{i = 1}^r|U_i| = \sum_{i = 1}^{r+1}\ell_i$, which is exactly the length of $P$ (this follows from the fact that
the paths $P^1,\ldots, P^{r+1}$ are pairwise arc-disjoint and their union is $P$). Therefore, $|U| = c$.

It remains to show that $U$ is a $k$-dominating set of $G$, that is, that every vertex
$u\in V(G)\setminus U$ has at least $k$ neighbors in $U$.
Let $u\in V(G)\setminus U$ and let $I\in {\cal I}$ be the interval corresponding to $u$.
We need to show that $I$ intersects at least $k$ intervals from the set
${\cal I}^U$. Note that $I\not\in {\cal I}^U$.

Let $I^-$ denote the rightmost interval in the set $\{I'\mid I'<I, I'\in {\cal I}^U\}$ if such an interval exists, otherwise let $I^- = I_0$.
Similarly, let $I^+$ denote the leftmost interval in the set $\{I'\mid  I<I', I'\in {\cal I}^U\}$ if such an interval exists, otherwise let $I^+ = I_{n+1}$.
Note that $I^-<I<I^+$.  We consider several subcases depending on $I^-$ and $I^+$.

\medskip
\noindent \emph{Case 1. $I^- = I_0$}. Let $s$ be the successor of $I_0$ on $P$.
Then, $(I_0,s)\in E_0$ and since $I_0 = I^- < I < I^+ = \min(s)$, condition $(2)$ in the definition of an arc in $E_0$
implies that $I$ intersects at least $k$ intervals from ${\cal I}_x\cup {\cal I}_s$.
Since $I$ does not intersect any interval associated with $I_0$, we infer that
$I$ intersects at least $k$ intervals from ${\cal I}_s$, which is a subset of ${\cal I}^U$.

\medskip
\noindent \emph{Case 2. $I^+ = I_{n+1}$.} This case is symmetric to Case 1.

\medskip
\noindent \emph{Case 3. $I_0<I^-<I^+ < I_{n+1}$ and $I^-\cap I^+ = \emptyset$}.
In this case, there exists a unique $i\in \{1,\ldots, r-1\}$
such that $I^-$ is associated with a node in $W_i$
and $I^+$ is associated with a node in $W_{i+1}$.
Let $(s',s'')$ be the (unique) arc of $P$ such that $s'\in W_i$ and $s''\in W_{i+1}$.
Then $(s',s'')\in E_0$ and since $\max(s') = I^- < I < I^+ = \min(s'')$, condition $(2)$ in the definition of an arc in $E_0$
implies that $I$ intersects at least $k$ intervals from ${\cal I}_{s'}\cup {\cal I}_{s''}$, which is a subset of ${\cal I}^U$.

\medskip
\noindent \emph{Case 4. $I_0<I^-<I^+ < I_{n+1}$ and $I^-\cap I^+ \neq \emptyset$}.
In this case, there exists a unique $i\in \{1,\ldots, r\}$ such that each of $I^-$, $I^+$ is associated with a node in $W_i$.
Furthermore, since $I^-$ and $I^+$ are consecutive intervals in ${\cal I}^U$, there exists a node $s\in W_i$ such that
both $I^-$ and $I^+$ are associated with $s$. This is clearly true if $W_i$ consists of a single node.
If $W_i$ consists of more than one node, then it consists of big nodes only, and the fact that all edges of $P$ connecting
two nodes in $W_i$ are in $E_1$ implies that a node $s$ with the desired property can be obtained by
defining $s$ as the node in $W_i$ closest to $I_{n+1}$ (along $P$) such that interval $I^-$ is associated with $s$.

Clearly, $s\in S\cup B$. We consider two further subcases.

\medskip
\noindent \emph{Case 4.1. $s\in S$}.
In this case, we have $\min(s)\le I^-<I<I^+\le \max(s)$ and condition $(3)$ in the definition of a small node implies that
interval $I$ intersects at least $k$ intervals from the set ${\cal I}_s$, which is a subset of ${\cal I}^U$.

\medskip
\noindent \emph{Case 4.2. $s\in B$}.
In this case, $W_i$ is the node set of a subpath of $P$, say $P'$, consisting of big nodes only. Let $I_{j_1},\ldots, I_{j_p}$ be the intervals corresponding to the vertices in $U_i$, ordered increasingly. The fact that all arcs of $P'$ are in $E_1$ imply that
$V(P') = W_i = \{s_q\mid  1\le q\le p-2k+1\}$ where $s_q = (I_{j_{q}}, \ldots, I_{j_{q+2k-1}})$ and $E(P') = \{(s_q,s_{q+1}) \mid 1\le q\le p-2k\}$.
Moreover, there exists a unique index $q\in \{1,\ldots, p-1\}$ such that
$I^- = I_{j_q}$ and $I^+ = I_{j_{q+1}}$.

Suppose first that $q<k$. Let $(s',s'')$ be the (unique) arc of $P$ such that $s'\not\in W_i$ and $s''\in W_i$.
Then $(s',s'')\in E_0$ and $s''\in B$, and $I_{j_1}<I<I_{j_k}$. Therefore, condition $(4)$ in the definition of an arc in $E_0$
guarantees that interval $I$ intersects at least $k$ intervals from ${\cal I}_{s''}$, and hence also at least $k$ intervals from the set ${\cal I}^U$.

The case when $p-q<k$ is symmetric to the case $q<k$ and can be analyzed using condition $(3)$ in the definition of an arc in $E_0$.

\begin{sloppypar}
Suppose now that $k\le q\le p-k$. Let $\alpha = q-k+1$ and let $s_\alpha = (I_{j_{\alpha}}, I_{j_{\alpha+1}},\ldots, I_{j_{\alpha+2k-1}})$.
Then $s_\alpha\in V(P')$ and $s_\alpha$ is a big node of $D_{k}$. Moreover, $I_{j_{\alpha+k-1}} = I_{j_q} = I^-<I<I^+ = I_{j_{q+1}} = I_{j_{\alpha+k}}$. Therefore, condition $(2)$ in the definition of a big node implies that
$I$ intersects at least $k$ intervals from the set ${\cal I}_{s_\alpha}$, which is a subset of ${\cal I}^{U}$.
\end{sloppypar}

This shows that $U$ is a $k$-dominating set of $G$ and completes the proof.
\end{proof}

\begin{theorem}\label{thm:kd-pig}
For every positive integer $k$, the $k$-domination problem is solvable in time $\mathcal{O}(|V(G)|^{3k})$ in the class of proper interval graphs.
\end{theorem}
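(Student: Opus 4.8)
The plan is to mirror the proof of Theorem~\ref{thm:tkd-pig} essentially verbatim, replacing the digraph $D^t_k$ by the digraph $D_k$ from Section~\ref{sec:kd} and invoking Proposition~\ref{prop:correctness-k-domination} in place of Proposition~\ref{prop:correctness}. Since $D_k$ has the same asymptotic size as $D^t_k$ (namely $\mathcal{O}(n^{2k})$ nodes and $\mathcal{O}(n^{4k})$ arcs), and the length function $\ell$ on its arcs is given by the very same formula~\eqref{eq:length}, it suffices to show that the dynamic-programming speedup carries over unchanged. Concretely, I would reuse the four families of quantities $p^0_s$, $p_s$, $p_e$, $p_\sigma$, the $k$-suffix $\suf_k(s)$, the partition $\{A_\sigma\mid \sigma\in\Sigma\}$ of $S\cup B'$ into classes of equal $k$-suffix, the same recursive formulas, and the same topological-sort processing order. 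The running-time bookkeeping is then identical: the node set, the subset $B'$, the partition, the arcs of $E_1$, and the arcs of $E_0$ incident to $I_0$ or $I_{n+1}$ are built in time $\mathcal{O}(n^{2k+1})$, each node $s\in S\cup B$ is processed in time $\mathcal{O}(n^{k}+d_{E_1}^-(s)+1)$, and node $I_{n+1}$ in time $\mathcal{O}(n^{2k})$, summing to $\mathcal{O}(n^{3k})$. Note that correctness of the reduction itself is already supplied by Proposition~\ref{prop:correctness-k-domination}, so the theorem is purely a running-time statement.

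The one point that genuinely requires re-checking, and which I expect to be the main obstacle, is the structural lemma that underlies the recursion for the $p^0_s$ values: that for a fixed head $s\in S\cup B$, membership of an arc $(\tilde s,s)$ in $E_0$ depends only on $s$ and on $\suf_k(\tilde s)$, provided $\tilde s\in S\cup B'$. I would redefine $B'$ to be the set of big nodes of $D_k$ satisfying condition (3) in the new definition of $E_0$ (a tail-only condition, since it refers solely to the intervals associated with $s$), and then verify, for a pair $(\bar s,s)$ with $\bar s\in S\cup B'$ and $\suf_k(\bar s)=\suf_k(\tilde s)$, that conditions (1)--(4) of the $D_k$-version of $E_0$ all transfer from $(\tilde s,s)$. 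Since $\suf_k(\bar s)=\suf_k(\tilde s)$ forces $\max(\bar s)=\max(\tilde s)$, condition (1) depends only on $\max(\bar s)$ and $\min(s)$ and so transfers immediately; condition (3) holds because $\bar s\in S\cup B'$; and condition (4) depends only on the head $s$, hence is unaffected.

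The only condition needing a real argument is (2), whose statement in $D_k$ is textually identical to the total-domination case, so the same proper-interval observation applies with no modification. For any interval $I$ with $\max(\tilde s)<I$, the intervals of $\tilde s$ meeting $I$ form a rightmost-consecutive block in the left-to-right order; hence if $I$ meets at least $k$ intervals of ${\cal I}_{\tilde s}\cup{\cal I}_s$, then it already meets at least $k$ intervals of ${\cal I}_{\suf_k(\tilde s)}\cup{\cal I}_s$ (if the block has size exceeding $k$ it covers all of $\suf_k(\tilde s)$, and otherwise it is contained in $\suf_k(\tilde s)$). Because $\suf_k(\bar s)=\suf_k(\tilde s)$, condition (2) for $(\tilde s,s)$ yields condition (2) for $(\bar s,s)$. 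With this lemma established, the correctness justification for the $p^0_s$ recursion reads word for word as in Theorem~\ref{thm:tkd-pig}, and the $\mathcal{O}(n^{3k})$ bound is inherited without change.
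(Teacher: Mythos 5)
Your proposal is correct and takes essentially the same route as the paper, whose own proof of Theorem~\ref{thm:kd-pig} is even terser: it simply asserts that the speedup from the proof of Theorem~\ref{thm:tkd-pig} carries over, whereas you explicitly isolate and verify the one point that needs checking, namely that for $\tilde s\in S\cup B'$ membership of $(\tilde s,s)$ in $E_0$ depends only on $s$ and $\suf_k(\tilde s)$ (with $B'$ redefined via the new tail-only condition (3)). The only detail worth adding is that in $D_k$ a small node may consist of fewer than $k$ intervals (the lower bound on $q$ is now $1$), so $\suf_k(s)$ must be read as the last $\min(k,|{\cal I}_s|)$ intervals; this is harmless, since there are still $\mathcal{O}(n^k)$ suffix classes and two nodes sharing a suffix of length less than $k$ are identical, so the transfer lemma holds trivially in that case.
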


\begin{proof}
The digraph $D_k$ has $\mathcal{O}(n^{2k})$ nodes and $\mathcal{O}(n^{4k})$ arcs and can be, together with the length function $\ell$ on its arcs, computed from $G$ directly from the definition in time $\mathcal{O}(n^{4k+1})$. Therefore, Proposition~\ref{prop:correctness-k-domination} implies the the $k$-domination problem is solvable in time $\mathcal{O}(n^{4k+1})$
in the class of $n$-vertex proper interval graphs.
Furthermore, the same speedup as the one used for total $k$-domination in the proof of Theorem~\ref{thm:tkd-pig} applies also to $k$-domination.
This proves Theorem~\ref{thm:kd-pig}.
\end{proof}

\section{The Weighted Problems}\label{sec:kd-weighted}

The approach of Kang et al.~\cite{Kang2017arXiv}, which implies that $k$-domination and total $k$-domination are solvable in time $O(|V(G)|^{6k+4})$ in the class of interval graphs also works for the weighted versions of the problems, where each vertex $u\in V(G)$ is equipped with a non-negative cost $c(u)$ and the task is to find a (usual or total) $k$-dominating set of $G$ of minimum total cost. For both families of problems, our approach can also be easily adapted to the weighed case. Denoting the total cost of a set ${\cal J}$ of vertices (i.e., intervals) by $c({\cal J}) = \sum_{I\in {\cal J}}c(I)$, it suffices to generalize the length function from~\eqref{eq:length}
in a straightforward way, as follows:
\begin{equation*}
\ell(s,s')=\left \{
\begin{array}{ll}
c({\cal I}_{s'}), & \hbox{if $(s,s')\in E_0$ and $s'\neq I_{n+1}$;}\\
c(\min(s')), & \hbox{if $(s,s')\in E_1$;}\\
0, & \hbox{otherwise.}
\end{array}
\right.
\end{equation*}
Except for this change, the algorithms are the same as for the unweighted versions, and the proof of correctness can be adapted easily.
We therefore obtain the following algorithmic result.

\begin{sloppypar}
\begin{theorem}
For every positive integer $k$, the weighted $k$-domination and weighted total $k$-domination problems are solvable in time $\mathcal{O}(|V(G)|^{3k})$ in the class of proper interval graphs.
\end{theorem}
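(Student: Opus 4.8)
The plan is to re-run the two reductions from Sections~\ref{sec:tkd-construction} and~\ref{sec:kd} essentially verbatim, observing that all of their combinatorial content is independent of the vertex costs, and then to re-check only the length bookkeeping under the new cost-based length function. The first step is to note that neither the node sets ($S$, $B$, $I_0$, $I_{n+1}$) nor the arc sets $E_0$, $E_1$ of $D^t_k$ and $D_k$ refer to $c$ in any way; they depend only on the proper interval model ${\cal I}$ and on $k$. Consequently, the structural parts of Propositions~\ref{prop:correctness} and~\ref{prop:correctness-k-domination} — that a set $U\subseteq V(G)$ is a (total) $k$-dominating set precisely when it arises as the interval set of the internal nodes of some directed $I_0,I_{n+1}$-path $P$ — carry over word for word, since those arguments never invoke arc lengths. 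Thus the correspondence between (total) $k$-dominating sets and $I_0,I_{n+1}$-paths is exactly as before; only the quantity attached to a path changes.

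The substantive step is to verify that, under the generalized length function, the length of a path $P$ equals the total cost $c(U)$ of the corresponding set $U$ rather than its cardinality. Following the accounting already used in the two propositions, I would partition $P$ into blocks $W_0,W_1,\ldots,W_{r+1}$ with $W_0=\{I_0\}$, $W_{r+1}=\{I_{n+1}\}$, each $W_i$ ($1\le i\le r$) being the node set of a maximal $E_1$-subpath corresponding to a component of $G[U]$ with interval set ${\cal I}^{U_i}$. For each such block the unique entering $E_0$-arc now contributes $c({\cal I}_{s_i'})$, the total cost of the $2k$ intervals of the first node (or, for a small block, of all its intervals), while each internal $E_1$-arc $(s,s')$ contributes the cost of the single interval in ${\cal I}_{s'}\setminus{\cal I}_s$ by which consecutive windows advance. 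Summing over a block, these contributions telescope so that each interval of $U_i$ is charged exactly once, giving block length $c(U_i)$; summing over $i$ yields $\sum_i c(U_i)=c(U)$. The same identity is used in both directions of each proposition, so the forward and backward implications go through with $|\cdot|$ replaced by $c(\cdot)$ throughout.

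Finally, the running-time analysis is unaffected. The dynamic-programming recurrences in the proofs of Theorems~\ref{thm:tkd-pig} and~\ref{thm:kd-pig} remain valid after replacing the additive term $|{\cal I}_s|$ in the $p^0_s$ recurrence by $c({\cal I}_s)$ and the constant $1$ in the $p_e$ recurrence by the appropriate single-vertex cost; each such weight is still evaluated in $\mathcal{O}(k)$ time per node or arc, so the partition-based speedup and its $\mathcal{O}(n^{3k})$ bound are preserved unchanged for both problems.

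The only point I expect to require care is the telescoping bookkeeping for big components: one must confirm that the cost carried by the entering $E_0$-arc (the first $2k$ intervals of the block) together with the costs carried by the successive $E_1$-arcs (one new interval each) account for every vertex of the component exactly once, with no interval double-counted or omitted. Everything else reduces to a routine substitution of $c$ for cardinality, which does not disturb any of the structural reasoning.
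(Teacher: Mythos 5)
Your proposal follows essentially the same route as the paper: keep the digraphs $D^t_k$ and $D_k$ exactly as in the unweighted case, replace the arc lengths by vertex costs, and observe that the structural correspondence between paths and (total) $k$-dominating sets and the $\mathcal{O}(|V(G)|^{3k})$ dynamic program are untouched. One small remark: your charge for an $E_1$-arc $(s,s')$ --- the cost of the unique interval in ${\cal I}_{s'}\setminus{\cal I}_{s}$, i.e.\ $c(\max(s'))$ --- is the correct telescoping weight, whereas the paper writes $c(\min(s'))$, which appears to be a typo; your bookkeeping for the big components is in fact the more careful one.
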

\end{sloppypar}

\section{Conclusion}\label{sec:conclusion}

\begin{sloppypar}
In this work, we developed novel algorithms for weighted $k$-domination and total $k$-domination problems in the class of proper interval graphs. The time complexity was significantly improved, from $\mathcal{O}(n^{6k+4})$ to $\mathcal{O}(n^{3k})$, for each fixed integer $k\ge 1$.
Our work leaves open several questions. Even though polynomial for each fixed $k$, our algorithms are too slow to be of practical use, and the main question is whether the exponential dependency on $k$ of the running time can be avoided. A related question is whether $k$-domination and total $k$-domination problems are fixed-parameter tractable with respect to $k$ in the class of proper interval graphs. Could it be that even the more general problems of {\it vector domination} and {\it total vector domination}~(see, e.g., \cite{MR1741406,MR1961204,MR3027964,MR3497986}), problems which generalize $k$-domination and total $k$-domination when $k$ is part of input,
can be solved in polynomial time for proper interval graphs?
It would also be interesting to determine the complexity of these problems in generalizations of proper interval graphs such as interval graphs, strongly chordal graphs, cocomparability graphs, and AT-free graphs.
\end{sloppypar}

\subsection*{Acknowledgements}

The authors are grateful to Matja\v{z} Krnc for helpful comments.
This work is supported in part by the Slovenian Research Agency (I$0$-$0035$, research program P$1$-$0285$ and research projects N$1$-$0032$, J$1$-$7051$, and J$1$-$9110$). The work for this paper was partly done in the framework of a bilateral project between Argentina and Slovenia, financed by the Slovenian Research Agency (BI-AR/$15$--$17$--$009$) and MINCYT-MHEST (SLO/14/09).

% Bibliography
%\bibliographystyle{abbrv}
%\bibliography{sample-bibliography}{}

\end{document}